\newcommand{\cH}{\ensuremath{\mathcal{H}}}
\def\BH{{\mathcal   B(\mathcal   H)}}
\def\dom{{\mathcal   D}}
\def\VH{{\mathcal   V(\mathcal   H)}}
\def\V0H{{{{\mathcal   V}}_{{\dom_{0}}}{({\mathcal   H})}}}
\def\cH{{{\mathcal   H}}}
\def\clos#1{\overline{#1}}
\def\Cal{\mathcal  }
\def\op{\overline{\oplus}}
\def\VH{{\mathcal   V(\mathcal   H)}}
\def\B{{\mathcal   B}_{f}({\mathcal   H})}
\def\V{{\mathcal   V}_{f}({\mathcal   H})}
\def\Vp{{\mathcal   V}_{f}({\mathcal   H})}
\def\S{{\mathcal   S}_{f}({\mathcal   H})}
\def\C{{\mathcal   C}_{f}({\mathcal   H})}
\def\G{{\mathcal   G}_{f}({\mathcal   H})}
\def\R{{\mathcal   R}_{f}({\mathcal   H})}
\def\VD{{\mathcal   V}_{fD}({\mathcal   H})}
\newcommand{\be}{\begin{equation}}
\newcommand{\bea}{\begin{eqnarray}}
\newcommand{\eea}{\end{eqnarray}}
\newcommand{\ed}{\end{document}}
\newtheorem{theorem}{Theorem}[section]
\theoremstyle{plain}
\newtheorem{corollary}[theorem]{Corollary}
\newtheorem{lemma}[theorem]{Lemma}
\newtheorem{proposition}[theorem]{Proposition}
\theoremstyle{definition}
\newtheorem{definition}[theorem]{Definition}
\newtheorem{example}[theorem]{Example}
\newtheorem{remark}[theorem]{Remark}
\begin{document}
\title
{On Bilinear Forms from the Point of View of Generalized Effect Algebras}
\author
{Anatolij Dvure\v{c}enskij$^{1,2}$ and Ji\v{r}\'i  Janda$^3$}
\maketitle
\begin{center}  \footnote{Keywords: Effect algebra, generalized effect algebra, Hilbert space, operator, unbounded operator, bilinear form, singular bilinear form, regular bilinear form, monotone convergence

 AMS classification: 81P15, 03G12, 03B50

The authors acknowledge the support (A.D.) by the Slovak Research and Development Agency under the contract APVV-0178-11, the grant VEGA No. 2/0059/12 SAV, ESF
Project CZ.1.07/2.3.00/20.0051, (J.J.) ESF
Project CZ.1.07/2.3.00/20.0051 and Masaryk University grant 0964/2009  }
Mathematical Institute,  Slovak Academy of Sciences,\\
\v Stef\'anikova 49, SK-814 73 Bratislava, Slovakia\\
$^2$ Depart. Algebra  Geom.,  Palack\'{y} Univer.\\
17. listopadu 12,
CZ-771 46 Olomouc, Czech Republic\\
$^3$ 	Depart. of Math. and Statistics, Faculty of Science\\
Masaryk University, Kotl\'a\v{r}sk\'a 267/2, CZ-611 37  Brno, Czech Republic
E-mail: {\tt
dvurecen@mat.savba.sk} \qquad {\tt xjanda@math.muni.cz}
\end{center}

\begin{abstract}
We study positive bilinear forms on a Hilbert space which are neither not necessarily bounded nor induced by some positive operator. We show when different families of bilinear forms can be described as a generalized effect algebra. In addition, we present families which are or are not monotone downwards (Dedekind upwards) $\sigma$-complete generalized effect algebras.
\end{abstract}

\section{Introduction}\label{aba:sec1}

The notion of an effect algebra was presented by Foulis and Bennett in \cite{FoBe}
as a model of quantum mechanical measurement. It allows an elegant algebraic description of the set $\mathcal {E(H)}$ of all Hermitian operators between the zero and
identity operator in a complex Hilbert space $\mathcal   {H}$ with a primary notion $\oplus$ which denotes the disjunction or the sum of two mutually excluding elements.  Effect algebras were inspired by D-posets, \cite{KoCh}, where the primary notion is a difference of two comparable events. Both structures are equivalent, and both are bounded posets.

Many examples of effect algebras are intervals in  partially ordered groups (= po-groups), for example, $\mathcal {E(H)}$ is the interval $[O,I]$ in the po-group $\mathcal{B(H)}_{sa}$ of all Hermitian operators on $\mathcal H.$    If the studied structure has no  greatest element, we can speak about generalized effect algebras. This situation can happen for an infinite-dimensional Hilbert space $\mathcal H$: Let $\mathcal{B(H)},$ $\mathcal {T(H)},$ and $\mathcal {C(H)}$ be the class of all bounded operators,  trace class operators, and compact operators, respectively,  on $\mathcal H$. They are partially ordered groups and for their positive cones $\mathcal{B(H)}^+,$  $\mathcal {T(H)}^+,$ and $\mathcal {C(H)}^+$ we have that they are generalized effect algebras which have no greatest element. The same is true for the class of positive trace operators under the identity or for the class of positive compact operators less than the identity. On the other hand, the class  of all positive bounded operators under the identity operator $I$ has the greatest element, namely $I.$

We can be recommend the monograph \cite{dvurec} as a basic source of information on effect algebras and generalized effect algebras.

Unbounded operators are of crucial importance for the Hilbert space quantum mechanics, see \cite{BEH, dvurec, rs,rs2,glea}. The situation with unbounded operators is from the algebraic point of view more complicated than one with bounded operators, see e.g. \cite{considerable, pr, romp1, romp3}, where the structure of the generalized effect algebra  $\VH$ of all positive linear operators and its properties was described.

The class of positive bilinear forms is more larger than the class of positive unbounded operators because we have bilinear forms which are not determined by any operator. Therefore, in this note we concentrate to the study of positive bilinear forms from the point of view of generalized effect algebras in a similar way as it was done for different sets of positive unbounded linear operators. We describe a structure of a generalized effect algebra on the set of positive bilinear forms as well as of important sub-classes of positive bilinear forms such as the class of regular and singular ones,  the class of closed ones, etc.

Moreover, some results on convergence  of sequences of bilinear forms which are monotone with respect to the partial order induced by the generalized effect algebraic partial operation in the given sub-class are given.

The paper is organized as follows. In Section 2, we recall some necessary definitions from the theory of effect algebras and Hilbert spaces, Section 3. Section 4 describes a structure of the generalized effect algebra on the set of all positive bilinear forms. We define some important subfamilies of positive bilinear forms, namely  of regular, singular, closed bilinear forms and ones generated by linear operators describing a generalized effect algebra structure on each of them. We also investigate a more restricted operation of sum of bilinear forms. In Section 5, we show which of the considered structures is or is not a monotone downwards (Dedekind upwards) $\sigma$-complete generalized effect algebra.

\section{Elements of generalized effect algebras}

In the section we gather  some basic definitions and facts on generalized effect algebras and effect algebras. For any unexplained notion on effect algebras or generalized effect algebras, we recommend the book \cite{dvurec}.

\begin{definition}\label{def:GEA}
A partial algebra $(E;\oplus,0)$ is called a {\em generalized
effect algebra} if $0\in E$ is a distinguished element and
$\oplus$ is a partially defined binary operation on $E$ which
satisfy the following conditions for any $x,y,z\in E$:

\begin{description}
\setlength{\leftmargin}{1.8cm}
\item[\rm(GEi)\phantom{ii}]  $x\oplus y=y\oplus x$, if one side is defined,
\item[\rm(GEii)\phantom{i}]
$(x\oplus y)\oplus z=x\oplus (y\oplus z)$, if one side is defined,
\item[\rm(GEiii)] $x\oplus 0=x$,
\item[\rm(GEiv)\phantom{i}] $x\oplus y=x\oplus z$ implies $y=z$
(cancellation law),
\item[\rm(GEv)\phantom{ii}]
$x\oplus y=0$ implies $x=y=0$.
\end{description}
\end{definition}

In every generalized effect algebra $E,$ a partial binary relation $\leq:=\leq_\oplus$ can be defined by
\begin{description}
\item[\rm(ED)] $x\leq y$ iff there exists an element $z\in E$ such that $x \oplus z$ is defined and $x \oplus z=y$.
\end{description}
Then $\leq$ is a partial order on $E$ under which $0$ is the least
element of $E$.

By (GEiv), the element $z$ such that $x\oplus z = y$ is unique and we denote it by $z = y \ominus x.$

A generalized effect algebra with the top element $1 \in E$ is called an {\it effect algebra} and we usually write $(E;\oplus,0,1).$ For example, if $\mathcal {E(H)}$ denotes the class of all Hermitian operators between the zero operator, $O$, and the identity operator, $I,$ then $(\mathcal{E(H)};\oplus, O,I),$ where $A\oplus B $ is defined  whenever $A+B \le I,$ and then $A\oplus B:= A+B,$ is a prototypical example of an effect algebra.

More generally, let $G$ be an Abelian po-group, i.e. a group $G$ is endowed with a partial order $\le$ such that if $a\le b,$ then $a + c\le b + c$ for every $c \in G.$ The positive cone $G^+$ of a po-group $G$ is defined as the set $G^+=\{g \in G \mid g\ge 0\}.$ Endow $G^+$ with the total operation $+,$ then $(G^+;+,0)$ is an example of a generalized effect algebra. If $u>0$,  define  an interval $[0,u]=\{g \in G \mid 0\le g \le u\}$ endowed with the partial operation $\oplus$ defined by $a\oplus b = a + b$ whenever $a+b\le u,$ then $([0,u];\oplus, 0,u)$ is an effect algebra. Also the set $[0,u) = \{g \in G \mid 0\le g < u\},$ with
the partial operation $\oplus$ by $a\oplus b = a + b$ whenever $a+b < u,$ forms a generalized effect algebra $([0,u];\oplus, 0)$.

A subset $Q$ of $E$ is called a {\it sub-generalized effect algebra}
of $E$ iff (i) $0 \in Q$, (ii)
if out of elements $x,y,z \in E$ such that $x\oplus y=z$ at least two are in $Q$, then all $x,y,z \in Q$.

Let $(E;\oplus,0)$ be a generalized effect algebra and $Q \subseteq E$ its subset. By  $\oplus_{\mid Q}$ we will denote the
restriction of $\oplus$ onto $Q\times Q $. That is, $\oplus_{\mid Q}$ is a partial operation on $Q$ such that, for any $x,y\in Q, x \oplus_{\mid Q} y$ is defined if and only if
$x \oplus y$ is defined in $E$ and $x\oplus y \in Q$ and then we set $x \oplus_{\mid Q} y := x \oplus y$.

\begin{remark}\label{subrem}
Let $(E;\oplus,0)$ be a generalized effect algebra and $S \subseteq E$ be a subset of $E$ such that $0\in S$ and whenever $x \oplus y$ is defined in $E$ for some $x, y \in S,$ then
$x \oplus y \in S$. Then $(S;\oplus_{\mid S},0)$ is a generalized effect algebra on its own with the induced partial order $\le := \le_{\mid S}$, but it need not to be a sub-generalized effect algebra of $(E;\oplus,0),$ because the definition of a sub-generalized effect algebra means that if $a,b \in S,$ then $a\le_\oplus b$ iff $a\le_{\mid S} b.$ In general, $a\le_{\mid S} b$ implies $a\le_\oplus b,$ but the converse statement is not necessarily true. For example, let $E=\mathbb Z^+$ and $S =\{0,4,6,8,10,\ldots\}.$ Then $E$ and $S$ are generalized effect algebras with respect to the standard addition, but $S$ is not a sub-generalized effect algebra of $E$ while $4 \le_\oplus 6$ but $4 \not\le_{\mid S} 6.$

We underline that every sub-generalized effect algebra $Q \subseteq E$ is a generalized effect algebra $(Q;\oplus_{\mid Q},0)$ in its own right.
\end{remark}

\section{Basic definitions and results on a Hilbert space and operators}

A complex vector space $S$ with an inner product $(\cdot\,,\cdot)$  is said to be a {\it pre-Hilbert space}. If
$\mathcal H$ is a complex inner product space which is complete with respect to the induced metric $\left\|\cdot\right\|,$ we call it a {\it Hilbert space}. We will use a "mathematical notion" of the inner product, hence $(\cdot\,,\cdot)$ is
linear in the left argument and antilinear in the right one.

We will assume that every  linear operator $A$ on $\mathcal H$ (i.e., a linear map $A:D(A)\to\Cal H$) has
the domain $D(A)$ which is a linear subspace dense in $\Cal H$ with respect to the topology induced by the inner product (we are saying that
$A$ is {\em densely defined}). As it was already said, by the symbol $O$ we mean the null operator; it is a bounded operator.
By $\Cal {O(H)}$ and $\Cal {D(H)}$ we denote the set of all operators on $\mathcal H$ and the set of dense linear subspaces of $\mathcal   H,$ respectively. A symbol $\clos{S}$ denotes the closure of a given subset $S$ of $\mathcal H.$ An operator $A$ is said to be {\it positive} if
$(Ax,x) \ge 0$ for all $x\in D(A)$.

For more information on the Hilbert space theory and Hilbert space operators we recommend to consult the books \cite{BEH, rs, rs2,Hal, kato}.

For every linear operator $A:D(A)\to\Cal H$ with
$\clos{D(A)}=\Cal H,$ there exists an {\em adjoint
operator $A^*$ of $A$} such that
$D(A^*)=\{y\in\Cal H\mid$ there exists a vector
$y^*\in\Cal H$ such that $(x,y^*) =(Ax,y) $\ for every $x\in D(A) \}$
and $A^*y=y^{*}$ for every $y\in D(A^{*})$.
If $A^*x=Ax$ for all $x\in D(A)$, $A$ is called {\it symmetric} and if also $D(A^*)=D(A)$, we call $A$ {\em self-adjoint} or {\it Hermitian} (for more details see \cite{BEH}).

Recall that a linear operator $A:D(A)\to\cH$ is said to be (i)
{\it bounded } if there exists a real constant $c\ge0$ such that $\|Ax\|\le c\|x\|$ for all $x\in
D(A),$ and (ii) {\it unbounded} if, for every $c\in \mathbb R$, $c\ge0,$ there exists a vector $x_c\in D(A)$ with
$\|Ax_c\|>c\|x_c\|$.  The set of all bounded operators on $\mathcal   H$ is denoted by ${\BH}$. Then $\BH$ with the usual addition $+$ of bounded operators and with the usual ordering of operators $A\le B$ iff $(Ax,x)\le (Bx,x)$ for any $x \in \mathcal H$ is a po-group, as well as the class of all Hermitian operators, $\BH_{sa}.$ In addition, $\mathcal{E(H)}=\{A \in \BH_{sa} \mid O \le A\le I\},$ and the positive cones $\BH^+$ and $\BH_{sa}^+$ induce generalized effect algebras.

We say that an operator $B$ with the domain $D(B)$ is an {\it extension} of an operator $A$ with the domain $D(A)$ if $D(A) \subseteq D(B)$ and $Ax=Bx$ for any $x \in D(A).$

\begin{theorem}\label{ext}{\rm \cite[Thm 1.5.5.]{BEH}}
For every bounded operator $A:D(A)\to\cH$ densely defined on $D(A)\subset \mathcal   H,$ there exists
a unique bounded extension $B$ such that
$D(B)=\cH$ and $Ax=Bx$ for every $x\in D(A)$. 
\end{theorem}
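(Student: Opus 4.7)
The plan is to define the extension $B$ pointwise by continuity, using density of $D(A)$ and completeness of $\cH$, and then deduce uniqueness from the same density. First, fix any $y\in\cH$. Since $\clos{D(A)}=\cH$, we may pick a sequence $(x_n)\subset D(A)$ with $x_n\to y$. The boundedness estimate $\|Ax_n-Ax_m\|\le c\|x_n-x_m\|$ shows that $(Ax_n)$ is Cauchy, hence convergent in $\cH$ by completeness; define $By:=\lim_n Ax_n$.

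The main technical point is showing that $By$ does not depend on the chosen approximating sequence. If $(x_n')\subset D(A)$ is another sequence with $x_n'\to y$, then $\|Ax_n-Ax_n'\|\le c\|x_n-x_n'\|\to 0$, so the two limits coincide. This is really the only place where care is needed; everything after this is a routine verification. In particular, linearity of $B$ follows from linearity of $A$ together with continuity of addition, scalar multiplication, and the limit operation. Boundedness with the same constant is immediate from continuity of the norm:
\begin{equation*}
\|By\|=\lim_n\|Ax_n\|\le c\lim_n\|x_n\|=c\|y\|.
\end{equation*}
Finally, $B$ extends $A$ because for any $x\in D(A)$ the constant sequence $x_n=x$ is admissible and yields $Bx=Ax$.

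For uniqueness, suppose $B'$ is another bounded (hence continuous) linear operator with $D(B')=\cH$ and $B'|_{D(A)}=A$. Then $B$ and $B'$ are two continuous maps on $\cH$ agreeing on the dense subset $D(A)$, so by a standard density argument they agree on all of $\cH$. Thus $B=B'$, completing the proof. The only genuine obstacle in the argument is the well-definedness step; the existence of the limit, the inheritance of linearity and of the bound, and the uniqueness are all consequences of continuity combined with density of $D(A)$.
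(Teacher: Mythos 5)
Your proof is correct and complete: it is the standard density-plus-completeness extension argument (define $B$ on limits of Cauchy images, check well-definedness, linearity, the bound, and uniqueness from continuity on a dense subspace). The paper itself gives no proof of this statement, citing it from the literature, and the argument you give is precisely the classical one behind that citation, so there is nothing to reconcile.
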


We  define the set
$$
\VH = \{A:D(A)\rightarrow{\Cal H}\mid A \geq O, \overline{D(A)}={\mathcal   H} \text{ and}\ D(A)=\cH\ \text{if}\ A\ \text{is bounded}\}.
$$
Let us define a partial operation $\oplus_{\dom}$ on $\VH$ as follows: for every $A, B \in \VH$, $A \oplus_{\dom} B$ is defined and $A \oplus_{\dom} B = A+B$ iff $A$ or $B$ is bounded or $D(A)=D(B)$.
In \cite{romp1}, it has been shown that $(\VH; \oplus_{\dom}, O)$ is a generalized effect algebra.

Let $D(t) \in\mathcal   {D(H)}$, $\clos{D(t)} = {\mathcal   H}$, a map $t : D(t) \times D(t) \rightarrow \mathbb{C}$ is called a {\it bilinear form} (or a {\it sesquilinear form}) if and only if it is
linear in both arguments and $(\alpha x, \beta y) = \alpha{\overline \beta} (x,y)$ for all $\alpha, \beta \in \mathbb{C}$, $x,y \in D(t)$, where $\overline{\beta}$ means the complex conjugation of $\beta.$  A complex-valued function $\hat t$ with the definition domain $D(t)$ is defined by $\hat t(x)=t(x,x),$ $x \in D(t)$ and it is said to be a {\it quadratic form} induced by a bilinear form $t,$ \cite[p. 12]{Hal}.
A bilinear form $t$ is called {\it symmetric} if $t(x,y) = \overline{t(y,x)}$. The set $R(t) = \{t(x,x) \mid x \in D(t), \left\| x \right\|=1\} \subseteq \mathbb{C}$ is called the  {\it numerical range} and it is well known that
$t$ is symmetric if and only if $R(t) \subseteq \mathbb{R}$. A symmetric bilinear form $t$ is (i) {\it semibounded} if there exists $m_{t} \in \mathbb{R}$ such that $m_t = \inf R(t),$ and  (ii) {\it positive}
if $m_{t} \ge 0$. Let $\mathcal {PBF(H)}$ be the class of all positive bilinear forms. A semibounded bilinear form is called {\it bounded} if there is $n_t \in \mathbb{R}$ such that $n_t = \sup R(t)$. Otherwise $t$ is called {\it unbounded}.
There exists a bilinear form $o$ with $D(o)=\mathcal H$ defined by $o(x,y)=0$ for all $x,y \in \mathcal   H.$ We denote by $\mathcal{BF(H)}$ the set of all bilinear forms on ${\mathcal H}.$

Given a symmetric semibounded bilinear form $t,$ we can equip its domain $D(t)$ with
an inner product $(x, y)_t := t(x, y) + (1+m_t)(x, y)$. In this way $D(t)$ becomes a pre-Hilbert space. Whenever $D(t)$ with $(x, y)_t$ is a Hilbert space, we call $t$ {\it closed.}

A bilinear form $s$ is an {\it extension} of a bilinear $t$ iff $D(t) \subseteq D(s)$ and, for every $x,y \in D(t)$, $t(x,y) = s(x,y);$ we will write $s_{\mid D(t)}=t.$ A bilinear form $t$ is {\it closable} if it has some closed extension.

Given two densely defined positive bilinear forms $t$ and $s,$ we write
\begin{eqnarray} t \preceq s \label{prec}
\end{eqnarray}
if and only if $D(t)\supseteq D(s)$ and $t(x,x)\le s(x,x)$ for all $x \in D(s).$ Then
$\preceq$ is a partial order on $\mathcal{PBF(H)},$ and the bilinear form $o$ is the least element of the class $\mathcal {PBF(H)},$ i.e. $o \preceq t$ for any $t \in \mathcal{PBF(H)}.$

It is well known that there is a one-to-one correspondence between  bounded linear operators and bounded bilinear forms. That is, for any bounded bilinear form $t$, $D(t)={\mathcal   H},$ there exists unique $A \in \BH$ given by $t(x,y) = (Ax,y)$ for all $x,y\in {\mathcal   H}$. Conversely, for any $B \in \BH,$ the mapping $s:\mathcal H \times \mathcal H \to \mathbb C,$ defined by $D(s) =\mathcal H$ and $s(x,y) = (Bx,y)$ for all $x,y\in {\mathcal   H}$, is a bilinear form.

It can be also seen that, for any $A \in \VH$, the map $(Ax, y)$ is a bilinear form on $D(A)$. We say that a bilinear form $t$
{\it corresponds} to an operator $A\in  \VH$  iff $D(A) \subseteq D(t)$ and, for every $x, y \in D(A),$ $t(x,y)=(Ax,y)$. We say that $t$ is {\it generated} by $A$ if $t$ corresponds to $A$ and $D(A)=D(t)$.

\section{Generalized effect algebras of positive bilinear forms}

The present section is the heart of the paper. We study different generalized effect algebras consisting of  families of positive bilinear forms.

The following is a well known fact, see e.g. \cite[Proposition 3.1.1.]{BEH}.

\begin{proposition}\label{extbeh}
There is a one-to-one correspondence between bounded linear operators and bounded bilinear forms on $\mathcal   H$ given by $t(x,y) = (Ax,y)$ for some $A \in \BH$ and all $x,y \in {\mathcal   H}$.
\end{proposition}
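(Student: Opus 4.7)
The plan is to apply the Riesz representation theorem and do the routine bookkeeping around it. In one direction, start with $A \in \BH$ and set $t(x,y) := (Ax,y)$ on $D(t) = \mathcal{H}$. Linearity in the first slot follows from linearity of $A$ and of the inner product, while antilinearity in the second slot is inherited from $(\cdot,\cdot)$. The Cauchy--Schwarz inequality together with boundedness of $A$ gives $|t(x,y)| \le \|A\|\,\|x\|\,\|y\|$, so $t$ is a bounded bilinear form.

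For the converse, fix a bounded bilinear form $t$ with $D(t) = \mathcal{H}$ and, for each $x \in \mathcal{H}$, consider the functional $\varphi_x(y) := \overline{t(x,y)}$. Because $t$ is antilinear in its second argument (the paper's sesquilinear convention), the conjugation turns $\varphi_x$ into a \emph{linear} functional on $\mathcal{H}$, and the bound $|t(x,y)|\le c\,\|x\|\,\|y\|$ makes $\varphi_x$ bounded. By the Riesz representation theorem there is a unique $w_x\in \mathcal{H}$ with $\varphi_x(y)=(y,w_x)$, equivalently $t(x,y) = (w_x,y)$. Define $A\colon \mathcal{H}\to\mathcal{H}$ by $Ax := w_x$.

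It remains to check that $A$ is a bounded linear operator and that the assignment $A\mapsto t$ is a bijection. Linearity of $A$ comes from linearity of $t$ in the first argument together with the uniqueness clause in Riesz: the identity $t(\alpha x_1+\beta x_2,y) = \alpha\,t(x_1,y)+\beta\,t(x_2,y)$ implies $(A(\alpha x_1+\beta x_2),y) = (\alpha Ax_1+\beta Ax_2,y)$ for every $y$. For the norm estimate use the standard trick $\|Ax\|^2 = (Ax,Ax) = t(x,Ax) \le c\,\|x\|\,\|Ax\|$, so $\|Ax\|\le c\,\|x\|$ and $A\in \BH$. Uniqueness of $A$ given $t$ is immediate: if $(A_1 x,y) = (A_2 x,y)$ for all $x,y$, choose $y = (A_1-A_2)x$ to conclude $A_1 = A_2$; injectivity of $A\mapsto t$ is by the same argument.

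There is no substantive obstacle here; the proposition is essentially a restatement of Riesz. The only point that needs mild care is the sesquilinear convention used in the paper (inner product antilinear in the second slot): one must conjugate $t(x,\cdot)$ before applying Riesz so that the resulting functional is genuinely linear, and correspondingly the representing vector $w_x$ appears in the \emph{first} slot of the inner product, which is precisely what is needed to write $t(x,y)=(Ax,y)$.
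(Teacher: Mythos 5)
Your proof is correct and is the standard Riesz-representation argument; the paper does not actually prove Proposition~\ref{extbeh} itself (it cites it as well known from \cite{BEH}), but the very same technique --- representing the bounded functional $y\mapsto t(y,x)$ via Lemma~\ref{funct} and reading off a bounded linear operator --- is exactly what the authors carry out in their proof of Lemma~\ref{exbil}. The only cosmetic difference is that you conjugate $t(x,\cdot)$ to obtain a linear functional and land on $t(x,y)=(Ax,y)$ directly, whereas the paper works with the already-linear functional $t(\cdot,x)$, obtains $t(y,x)=(y,Ax)$, and then passes to $(Ax,y)$; both are equivalent under the stated sesquilinearity convention.
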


\begin{lemma}\label{funct}{\rm \cite[Corollary 1.3.7]{glea}}
For any bounded linear functional $f$ on some inner product space $S$ (not necessary complete), there exists a unique vector $z\in {\overline S}$ such that $f(y) = (y,z)$ for all $y \in S$.
\end{lemma}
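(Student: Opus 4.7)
The plan is to reduce the statement to the classical Riesz representation theorem, which applies to complete inner product spaces. Since the closure $\overline S$ is a closed subspace of the Hilbert space $\mathcal H$, it is itself a Hilbert space under the restricted inner product, so Riesz is available there; the only real work is producing a bounded linear functional on all of $\overline S$ from the one given on $S$, and checking that the representing vector lies in $\overline S$ rather than merely in $\mathcal H$.

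First I would extend $f$ to $\overline S$. Since $f$ is bounded on $S$, it is Lipschitz and in particular uniformly continuous; by the standard density argument (define $\tilde f(y):=\lim_n f(y_n)$ for any sequence $y_n\in S$ with $y_n\to y$, and verify the limit exists and is independent of the approximating sequence using boundedness), $f$ extends to a unique bounded linear functional $\tilde f$ on $\overline S$ of the same norm. Linearity transfers by continuity of the vector-space operations.

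Next I would apply the classical Riesz representation theorem in the Hilbert space $\overline S$: there exists a unique $z\in\overline S$ such that $\tilde f(y)=(y,z)$ for every $y\in \overline S$. Restricting to $S$ yields $f(y)=(y,z)$ for all $y\in S$, which gives existence. For uniqueness within $\overline S$, suppose some $z'\in\overline S$ also satisfies $f(y)=(y,z')$ for all $y\in S$. Then $(y,z-z')=0$ for every $y\in S$; by continuity of the inner product in the first slot and the density of $S$ in $\overline S$, this identity extends to every $y\in\overline S$. Taking $y=z-z'\in\overline S$ gives $\|z-z'\|^{2}=0$, whence $z=z'$.

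The only point that could trip one up is insisting that $z\in\overline S$ and not just $z\in\mathcal H$: if one worked inside $\mathcal H$ directly, any vector differing from the representing one by something orthogonal to $\overline S$ would represent the same functional on $S$, destroying uniqueness. Performing Riesz inside the Hilbert space $\overline S$ itself eliminates this ambiguity, so this is really the crux of the argument.
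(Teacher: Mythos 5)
Your argument is correct. Note that the paper itself gives no proof of this lemma; it is quoted verbatim from the cited reference (Corollary 1.3.7 of Dvure\v{c}enskij's book), so there is nothing to compare against except the standard argument --- which is exactly what you give: extend $f$ by uniform continuity and density to a bounded linear functional on the Hilbert space $\overline S$, apply the classical Riesz representation theorem there, and restrict back to $S$. Your closing remark correctly identifies the one point of substance, namely that uniqueness forces one to carry out the Riesz step inside $\overline S$ rather than in the ambient space $\mathcal H$, since adding any vector orthogonal to $\overline S$ would not change the functional on $S$. The only cosmetic caveat is that $\overline S$ should be read as the completion of $S$ (or its closure inside an ambient Hilbert space, as is the case everywhere this lemma is used in the paper); either way it is complete, so the Riesz theorem applies and your proof goes through.
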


\begin{lemma}\label{exbil}
Let $t$ be a positive bounded bilinear form defined on some dense subspace $D(t)$ of ${\mathcal   H}$. Then there exists a unique bounded bilinear form
$s$ such that $D(s) = {\mathcal   H}$ and $s_{\mid {D(t)}}=t$.
\end{lemma}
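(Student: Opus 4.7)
The plan is to reduce the extension problem for the bilinear form to the extension problem for a bounded operator, so that Theorem \ref{ext} does the work. The first step is to fix $y \in D(t)$ and consider the linear functional $f_y : D(t) \to \mathbb{C}$ given by $f_y(x) := t(x,y)$. Because $t$ is bounded, say $|t(x,y)| \le c\|x\|\|y\|$, the functional $f_y$ is bounded on the inner product space $D(t)$ with $\|f_y\| \le c\|y\|$. Applying Lemma \ref{funct} to $f_y$ (noting $\overline{D(t)} = \mathcal{H}$) produces a unique vector $Ay \in \mathcal{H}$ such that
\[
t(x,y) = (x, Ay) \qquad \text{for all } x \in D(t), \quad \|Ay\| \le c\|y\|.
\]

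Next I would verify that $y \mapsto Ay$ defines a bounded linear operator $A : D(t) \to \mathcal{H}$. Linearity follows from expanding $t(x,\alpha y + \beta y') = \bar\alpha\, t(x,y) + \bar\beta\, t(x,y') = (x,\alpha Ay + \beta Ay')$ and invoking the uniqueness part of Lemma \ref{funct}; the operator norm bound is already recorded above. Theorem \ref{ext} then yields a unique bounded extension $\widetilde A : \mathcal{H} \to \mathcal{H}$ of $A$, and I define
\[
s(x,y) := (x, \widetilde A y), \qquad x,y \in \mathcal{H}.
\]
This $s$ is manifestly a bilinear form on $\mathcal{H}$ (linear in $x$, antilinear in $y$), is bounded by $c$, has $D(s) = \mathcal{H}$, and agrees with $t$ on $D(t) \times D(t)$ by construction.

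For uniqueness, suppose $s'$ is any bounded bilinear form on $\mathcal{H}$ with $s'_{\mid D(t)} = t$. By Proposition \ref{extbeh}, $s'(x,y) = (x, By)$ for a unique $B \in \BH$. Then $(x, By) = (x, \widetilde A y)$ for all $x, y \in D(t)$; density of $D(t)$ in $\mathcal{H}$ first forces $By = \widetilde A y$ for every $y \in D(t)$, and then boundedness together with density forces $B = \widetilde A$ on all of $\mathcal{H}$, giving $s' = s$.

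The substantive content is really just the Riesz-type Lemma \ref{funct} together with the operator extension Theorem \ref{ext}; the only step that requires any care is making sure the conjugate-linearity in the second slot of $t$ is bookkept correctly so that $A$ turns out linear (not antilinear), and that the double density argument in the uniqueness step is invoked in the right order. No genuine obstacle arises beyond that.
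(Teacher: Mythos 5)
Your proof is correct and follows essentially the same route as the paper's: represent the form via Lemma \ref{funct} as $t(x,y)=(x,Ay)$ with $A$ a bounded operator on $D(t)$, extend $A$ by Theorem \ref{ext}, and get uniqueness from Proposition \ref{extbeh} together with density of $D(t)$. The only (cosmetic) differences are that you keep the operator in the second slot, which lets you skip the paper's symmetry computation converting $(y,Ax)$ into $(Ax,y)$, and that passing from the paper's numerical-range definition of boundedness to the two-variable bound $|t(x,y)|\le c\|x\|\,\|y\|$ tacitly uses the Cauchy--Schwarz inequality for the positive form $t$ --- worth a half-sentence, but not a gap.
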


\begin{proof}
Let $t$ be a positive bounded bilinear form defined on some dense subspace $D(t) \subseteq {\mathcal   H}$. Every $x \in D(t)$
defines a linear functional $t_x$ by $t_{x}(y) = t(y,x)$. By Lemma \ref{funct}, for every $t_{x}$
there exists $z \in {\mathcal   H}$ such that $t_{x}(y) = (y,z) = t(y,x)$ for all $y \in D(t)$. We can define a map $A : D(t) \rightarrow {\mathcal   H}$
by $Ax=z$. We show that $A$ is an operator.

Let us have  arbitrary $x_1,x_2 \in D(t)$. There are $z_1, z_2, z \in {\mathcal   H}$ such that

\begin{eqnarray*}
&  & Ax_1=z_1 \text{ hence } t(y,x_1)=(y,z_1) \text{ for all } y\in D(t),\label{eq} \\
&  & Ax_2=z_2 \text{ hence } t(y,x_2)=(y,z_2) \text{ for all } y\in D(t), \\
&  & A(x_1+x_2) = z \text{ hence } t(y,x_1+x_2)=(y,z) \text{ for all } y\in D(t).
\end{eqnarray*}

Then $(y,z) = t(y,x_1+x_2)=t(y,x_1)+t(y,x_2)=(y,z_1)+(y,z_2)=(y,z_1+z_2)$ for all $y\in D(t)$, hence $z = z_1+z_2$
and $A(x_1+x_2) = Ax_1+ Ax_2$.

Similarly, choose $x \in D(t)$ and arbitrary $\alpha \in \mathbb{C}$. Let $z_1, z_2 \in {\mathcal   H}$ be  elements such that

\begin{eqnarray*}
& & Ax = z_1 \text{ hence } t(y,x) = (y,z_1) \text{ for all } y\in D(t), \\
& & A(\alpha x) = z_2 \text{ hence } t(y,\alpha x) = (y,z_2) \text{ for all } y\in D(t).
\end{eqnarray*}
We have $(y,z_2) = t(y,\alpha x)= {\overline \alpha}t(y,x) = {\overline \alpha} t(y,z_1) = t(y,\alpha z_1)$ for all $y\in D(t)$, which yields $z_2 =\alpha z_1,$
that is $\alpha Ax = A(\alpha x)$.

Hence, $t(y,x) = (y, Ax)$ for all $x,y \in D(t)$ and, for any $x,y \in D(t),$ we have $(y, Ax) = t(y,x) = \overline{t(x,y)}=\overline{(x,Ay)}=(Ay,x).$ We have $t(x,y)=(Ax,y),$ $x,y \in D(t).$ Therefore, $A$ is an operator and
$t$ is generated by  $A$.

Since $A$ is a bounded operator, it can be uniquely extended to an operator ${\overline A}$ defined on ${\mathcal   H}$, i.e. $D(\overline{A})={\mathcal   H}$. The operator ${\overline A}$ generates a bilinear form $s$, which is an extension of $t$ defined on the whole ${\mathcal   H}$. If there exists another extension $q$ of $t$ on ${\mathcal   H}$, by Proposition \ref{extbeh}, there exists an operator $B$ which generates $q$. We have ${\overline A}_{\mid{D(t)}} = A = B_{\mid D(t)},$ hence by Theorem \ref{ext}, ${\overline A} = B$.
\end{proof}

On the set of all  bilinear forms, $\mathcal{BF(H)},$ we can define a {\it usual sum} $t+s$ for any $t, s$ on $ D(t+s):=D(t) \cap D(s)$
by $(t+ s)(x,y):=t(x,y)+ s(x,y)$ for all  $x,y \in D(t) \cap D(s)$, and the multiplication by a scalar $\alpha\in \mathbb C$ by $(\alpha t)(x,y):= \alpha t(x,y)$ for $x,y \in D(\alpha t):=D(t).$

\begin{lemma}\label{nr}
Let $t$ and $s$ be  positive bilinear forms. Then, for the numerical range $R(t+s)$, we have $R(t+s) \subseteq \{x + y \mid x \in R(t),\ y \in R(s)\}$.
\end{lemma}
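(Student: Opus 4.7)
The plan is to unwind the definitions and observe that the inclusion is essentially tautological. I would start by picking an arbitrary element $\lambda \in R(t+s)$. By definition of the numerical range applied to the form $t+s$, there exists a unit vector $x \in D(t+s)$ with $\lambda = (t+s)(x,x)$. Recall from the definition of the usual sum of bilinear forms that $D(t+s) = D(t) \cap D(s)$, so in particular $x \in D(t)$ and $x \in D(s)$.

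Next, using the pointwise definition $(t+s)(x,x) = t(x,x) + s(x,x)$, I would set $\mu := t(x,x)$ and $\nu := s(x,x)$, noting that $\|x\|=1$ combined with $x \in D(t)$ gives $\mu \in R(t)$ by the very definition of $R(t)$, and similarly $\nu \in R(s)$. Hence $\lambda = \mu + \nu$ with $\mu \in R(t)$ and $\nu \in R(s)$, establishing the claimed inclusion.

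Since the argument is essentially a direct verification, there is no serious obstacle; the only thing worth flagging in the write-up is that the positivity hypothesis on $t$ and $s$ is not actually required for this particular set-theoretic inclusion — it simply guarantees that $R(t), R(s) \subseteq \mathbb{R}^{\geq 0}$, which makes the sum on the right meaningful as a subset of $\mathbb{R}$. One could optionally remark that the reverse inclusion generally fails, since an element $\mu + \nu$ with $\mu = t(x,x)$ and $\nu = s(y,y)$ for different unit vectors $x \neq y$ need not lie in $R(t+s)$, so the lemma records only the one-sided containment that will be needed later.
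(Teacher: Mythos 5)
Your proof is correct and matches the paper's approach: the paper simply writes ``Clear,'' and your direct unwinding of the definitions (a unit vector $x \in D(t+s) = D(t)\cap D(s)$ yields $(t+s)(x,x) = t(x,x)+s(x,x)$ with each summand in the respective numerical range) is exactly the intended verification. Your side remarks about positivity being inessential and the failure of the reverse inclusion are accurate but optional.
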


\begin{proof}
Clear.
\end{proof}

Using  Lemma \ref{exbil}, the definition of a class of bilinear forms $\Vp$ in the next theorem is an analogy with the definition of the set of operators $\VH$ from \cite{romp1}.

\begin{theorem}

Let ${\mathcal   H}$ be an infinite-dimensional complex Hilbert space. Let us define the set of bilinear forms
\[
\Vp = \{t \mid t \in \mathcal{BF(H)}, \overline{D(t)}={\mathcal   H} \text{ and} \text{ if }\ t\ \text{is bounded, then } D(t)=\cH \}.
\]

\noindent Let us define a partial operation $\oplus$ on $\Vp$ by
\begin{description}
\item for  $t,s \in \Vp$, $t \oplus s$ is defined if and only if $t$ or $s$ is bounded or $D(t) = D(s)$ and then
$t \oplus s = t + s$.
\end{description}
Then $(\Vp;\oplus,o)$ is a generalized effect algebra.
\end{theorem}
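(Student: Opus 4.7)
My plan is to verify the five axioms of Definition \ref{def:GEA} for $(\Vp;\oplus,o)$, after first checking that $\oplus$ is a well-defined partial operation on $\Vp$. Commutativity (GEi) is immediate from the symmetry of both the compatibility condition and of the usual sum $+$. The zero law (GEiii) uses that $o$ is bounded with $D(o)=\mathcal H$, so $t\oplus o$ is always defined and equals $t+o=t$ on $D(t)$. Axiom (GEv) follows because $t\oplus s=o$ forces $D(t)\cap D(s)=D(o)=\mathcal H$, so $D(t)=D(s)=\mathcal H$; positivity of $t,s$ together with $t(x,x)+s(x,x)=0$ then gives $\hat t\equiv \hat s\equiv 0$ on $\mathcal H$, and polarization yields $t=s=o$.

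To see that $t\oplus s \in \Vp$ whenever it is defined, I would note that $D(t+s)=D(t)\cap D(s)$ equals one of $D(t)$, $D(s)$, or $\mathcal H$ in the three cases of the definition, and so is dense. Moreover, if $t+s$ happens to be bounded on $D(t)\cap D(s)$, then positivity gives $t(x,x)\le (t+s)(x,x)$ and $s(x,x)\le(t+s)(x,x)$, so each of $t,s$ is bounded on its own domain; the $\Vp$-condition then forces $D(t)=D(s)=\mathcal H$ and hence $D(t+s)=\mathcal H$. Thus $t\oplus s\in \Vp$.

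Associativity (GEii) I would reduce to a symmetric condition on triples. The key observation is that $t\oplus s$ is defined iff the set of unbounded forms among $\{t,s\}$ shares a common domain. By induction, $(t\oplus s)\oplus r$ is defined iff all unbounded forms among $\{t,s,r\}$ share a common domain---a condition symmetric in $t,s,r$. Hence $(t\oplus s)\oplus r$ and $t\oplus (s\oplus r)$ are defined under the same conditions, and when defined both equal the triple sum $t+s+r$ restricted to $D(t)\cap D(s)\cap D(r)$.

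Cancellation (GEiv) is the most delicate step and proceeds by case analysis on whether $t$ is bounded. If $t$ is bounded, then $D(t\oplus s)=D(s)$ and $D(t\oplus r)=D(r)$, so the equation $t\oplus s=t\oplus r$ immediately yields $D(s)=D(r)$ and pointwise agreement. If $t$ is unbounded, the potentially problematic \emph{mixed} cases must be ruled out: for instance, if $s$ were bounded (with $D(s)=\mathcal H$) and $r$ unbounded with $D(r)=D(t)\neq\mathcal H$, then the equation forces $r$ to agree with $s$ on $D(t)$, making $r$ bounded, and the $\Vp$-condition would then require $D(r)=\mathcal H$, contradicting $D(r)=D(t)\neq \mathcal H$. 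The remaining configurations give $s=r$ directly. I expect the main obstacle to lie in organizing the case analysis for (GEii) and (GEiv) cleanly, since each branches on the boundedness pattern of the forms and on when a bounded restriction of a supposedly unbounded form triggers a contradiction via the $\Vp$-condition---this last point is what makes the axioms actually close up on $\Vp$ rather than on a larger class.
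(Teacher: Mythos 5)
Your proposal is correct and follows essentially the same route as the paper: it verifies (GEi)--(GEv) by case analysis on which forms are bounded, uses positivity to show a bounded sum forces both summands to be bounded (hence everywhere defined), and uses polarization for (GEv); your symmetric reformulation of when $(t\oplus s)\oplus r$ is defined is just a tidier packaging of the paper's case-by-case associativity check. The one step you pass over too quickly is the cancellation case with $t$ unbounded and $s,r$ both bounded: there the equation only yields $s=r$ on the dense subspace $D(t)$, and concluding $s=r$ on all of $\mathcal H$ requires the uniqueness of the bounded extension from a dense domain (the paper's Lemma \ref{exbil}), not merely ``direct'' agreement.
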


\begin{proof}
Let us have two bounded bilinear forms $t,s \in \Vp$. Then $t\oplus s = t+s$ is by Lemma \ref{nr} also bounded and $D(t+s)= D(t) \cap D(s) = {\mathcal   H}$, hence $t\oplus s \in \Vp$. If
$t$ is unbounded and $s$ bounded, we have $t\oplus s = t+s$ is unbounded with $D(t+s)=D(t) \cap {\mathcal   H} = D(t)$, i.e. $t\oplus s$ is densely defined and $t\oplus s \in \Vp$.
Whenever $t$ and $s$ are both unbounded, then $t\oplus s$ is defined iff $D(t) = D(s)=D(t+s)=D(t\oplus s),$ hence also $t\oplus s \in \Vp$.

The commutativity of $\oplus$ holds since the usual sum $+$ is commutative. (GEiii) is also clear because $o$ is bounded and $D(o) = {\mathcal   H}$.

Let us show the positivity (GEv) of the operation $\oplus$ on $\Vp.$ The proof follows ideas from \cite{pr}.
Let us have $s,t\in \Vp$ such that $s\oplus t=o$. Then $D(t\oplus s)=D(o)=\mathcal H$ and therefore, $D(t)=D(s)=\mathcal H.$
We have $t(x,x)+s(x,x) = 0$ for all $x \in D(s) \cap D(t)=\mathcal H.$ So that, $t(x,x)=0=s(x,x)$ for any $x \in \mathcal H.$  For any $x, y \in \mathcal H,$ it holds $0=t(x+y,x+y)=t(x,x)+t(x,y)+t(y,x)+t(y,y) = t(x,y)+t(y,x)$ and similarly $0=t(x+\imath y,x+\imath y)= t(x,x)+\imath t(x,y)+\imath t(y,x)+t(y,y)=\imath (t(x,y)-t(y,x))$, hence $t(x,y)=t(y,x)=0$ and $t=o.$

(GEii) Let us have $r,s,t \in \mathcal{BF(H)}$ such that $(r\oplus s)\oplus t$ is defined.

(i) Let $r,s,t$ be all unbounded, hence $D(r)=D(s)=D(t)$.
Then $(r\oplus s)\oplus t = (r(x,y)+s(x,y))+t(x,y) = r(x,y)+(s(x,y)+t(x,y)) = (r\oplus s)\oplus t$ for all $x,y \in D(t)$. The other cases are similar.

(ii) Let $r$ be bounded and $s, t$ unbounded. Then $D(r) \cap D(s) = D(s)$ and $(r\oplus s)\oplus t$ is defined iff $D(s)=D(t)$, hence $r \oplus (s \oplus t)$ is defined
and $(r\oplus s)\oplus t = r \oplus (s \oplus t)$ (the case when $s$ is bounded and $r, t$ are unbounded is similar).

(iii) Let $t$ be bounded, $r, s$ unbounded. Then $D(r) = D(s) = D(s \oplus t)$ and we have defined $r \oplus (s \oplus t)=(r\oplus s)\oplus t$.

(iv) It is not hard to see that if  two of three bilinear forms $r,s,t \in \Vp$ are bounded and the third one is unbounded, then the both sides of the equation of the associativity are defined and we
have the same domains which equal to the domain of the unbounded bilinear  form. The case when $t, r, s$ are all bounded immediately follows from the associativity of the usual sum (and of the associativity of complex numbers, respectively).

(GEiv) Let $t \oplus r = t \oplus s$.
Let $t, r, s$ be all bounded. Then $t(x,y) + r(x,y) = t(x,y)+s(x,y)$ for all $x,y \in {\mathcal   H}$, which is equivalent to $r(x,y)=s(x,y)$ for all $x,y \in {\mathcal   H}$
hence $r = s$.

Let $t$ be bounded and $r$  unbounded. Then $s$ has to be unbounded and $D(r) = D(t+r) = D(t+s)=D(s)$. By the same argument as in the previous case (restricted on $D(r)$) we have
$r=s$.

Let $t$ be unbounded and $r$ bounded. Then  $D(t) =D(t+r) = D(t+s)$. For every $x, y \in D(t),$ it holds $t(x,y) + r(x,y) = t(x,y)+s(x,y).$ Hence, $r(x,y)=s(x,y)$, that is
$s_{\mid D(t)} = r_{\mid D(t)}$. Since $r$ is on $D(t)$ bounded, $s$ is also bounded and by Lemma \ref{exbil}, $s$ can be extended on ${\mathcal   H}$ in a unique way, that is $s = r$.

Let $t$ be unbounded and $r$ unbounded. $D(t) = D(r) = D(t+r) = D(t+s)$ and in the same way as in the previous case, $s_{\mid D(t)} = r_{\mid D(t)}$. Since $r$ is unbounded, $s$ is also unbounded. Because $t \oplus s$ is defined, we have $D(t) = D(s) = D(r),$ hence $s = r$.
\end{proof}

\begin{theorem}\label{cc}{\rm \cite[Thm X.23]{rs2}}
Let $t$ be a bilinear form. If it is generated by some linear operator $A$, then $t$ is closable.
\end{theorem}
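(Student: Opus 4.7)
The plan is to invoke the standard closability criterion for positive symmetric forms: a positive form $t$ is closable iff whenever $x_n \in D(t)$ with $x_n \to 0$ in $\mathcal H$ and $(x_n)$ is Cauchy with respect to the form seminorm $\|x\|_t := t(x,x)^{1/2}$, one has $\|x_n\|_t \to 0$. Since $A \in \VH$ is densely defined and positive, and since $t$ is generated by $A$ (so $D(t)=D(A)$ and $t(x,y)=(Ax,y)$), the form $t$ is positive, symmetric, and densely defined, so the criterion applies.

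The key computation is then to fix such a sequence $(x_n) \subseteq D(A)$ and estimate $\|x_n\|_t^2 = t(x_n,x_n)$. I would split
\[
t(x_n,x_n) = t(x_n-x_m,x_n) + t(x_m,x_n) = t(x_n-x_m,x_n) + (Ax_m,x_n),
\]
where the second equality uses precisely that $t$ is generated by $A$ (so $t(x_m,x_n)=(Ax_m,x_n)$ is well defined for $x_m,x_n \in D(A)$). The Cauchy--Schwarz inequality for the positive form $t$ gives $|t(x_n-x_m,x_n)| \leq \|x_n-x_m\|_t \, \|x_n\|_t$, while for fixed $m$ the inner product $(Ax_m,x_n) \to 0$ as $n\to\infty$ because $x_n\to 0$ in $\mathcal H$. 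Given $\varepsilon>0$, pick $N$ so that $\|x_n-x_m\|_t<\varepsilon$ for all $n,m\geq N$; since $(\|x_n\|_t)$ is bounded (a Cauchy sequence of reals), fixing any $m\geq N$ and letting $n\to\infty$ yields $\limsup_n \|x_n\|_t^2 \leq \varepsilon \cdot \sup_n \|x_n\|_t$. As $\varepsilon$ is arbitrary, $\|x_n\|_t \to 0$, which is exactly the closability condition.

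Having verified the criterion, I would conclude that $t$ admits a closed extension, namely the closure $\bar t$, whose domain consists of those $x\in\mathcal H$ for which there exists a sequence $x_n\in D(t)$ with $x_n\to x$ and $(x_n)$ Cauchy in $\|\cdot\|_t$, and on which $\bar t(x,x) = \lim_n t(x_n,x_n)$ (this limit being independent of the approximating sequence precisely because of the closability criterion just verified).

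The main obstacle I anticipate is the slightly delicate limit argument: one must be careful to use that $\|x_n\|_t$ is uniformly bounded (from the Cauchy condition) so that the right-hand side of the estimate is genuinely controlled by $\varepsilon$, and one must invoke the Cauchy--Schwarz inequality for the \emph{form} $t$ (which holds because $t$ is positive), not for $A$ itself. Apart from that, the argument is essentially a direct calculation exploiting the identity $t(x,y)=(Ax,y)$ together with $x_n\to 0$ in $\mathcal H$.
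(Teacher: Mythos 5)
Your proof is correct: verifying the standard closability criterion by writing $t(x_n,x_n)=t(x_n-x_m,x_n)+(Ax_m,x_n)$, applying the Cauchy--Schwarz inequality for the positive form to the first term, and using $x_n\to 0$ together with the boundedness of the Cauchy sequence $\|x_n\|_t$ is exactly the classical argument. The paper itself gives no proof of this statement --- it is quoted from Reed--Simon, Thm X.23 --- and your argument is essentially the one found there (the only implicit step worth noting is that positivity of $A$ on a \emph{complex} Hilbert space forces $A$, hence $t$, to be symmetric, which justifies Cauchy--Schwarz for $t$).
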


\begin{remark}\label{remord}
Let $\leq:=\leq_\oplus$ be the partial order derived from the generalized effect algebra $(\Vp;\oplus,o).$ Then $t\leq s$ means that there is an $r \in \Vp$ such that $t\oplus r = s.$ In addition, we have then also $t \preceq s,$ where $\preceq$ is defined by (\ref{prec}). We note that  the statement ``$t \le_\oplus s$" and the statement ``$t \preceq s$" and ``[$D(t)=\mathcal H$ or $D(t)=D(s)$]" are equivalent.
\end{remark}

\begin{theorem}\label{bs}{\rm \cite[Thm 2.1, Thm 2.2]{bs}}
Let $t$ be a densely defined positive symmetric bilinear form on a Hilbert space ${\mathcal   H}$. Then there exist two positive symmetric bilinear forms
$t_r$ and $t_s$ such that $D(t)=D(t_r)=D(t_s)$ such that
\begin{equation}
t = t_r + t_s,
\label{eq:rs}
\end{equation}
where $t_r$ is the largest closable bilinear form less than $t$ in the ordering $\preceq.$
\end{theorem}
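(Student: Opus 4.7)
The plan is to adapt Simon's canonical decomposition for symmetric forms. First, I would turn $D(t)$ into a pre-Hilbert space $\mathcal H_t$ by equipping it with the inner product $(x,y)_t := t(x,y) + (x,y)$, and let $\widetilde{\mathcal H}_t$ denote its completion. Since the inclusion $D(t) \hookrightarrow \mathcal H$ is $\|\cdot\|_t$-norm-decreasing, it extends uniquely to a bounded linear map $\tilde\iota \colon \widetilde{\mathcal H}_t \to \mathcal H$ with $\|\tilde\iota\| \le 1$. The classical criterion then reads: $t$ is closable if and only if $\tilde\iota$ is injective.

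Set $\mathcal N := \ker \tilde\iota$, a closed subspace of $\widetilde{\mathcal H}_t$, and use the orthogonal (with respect to $(\cdot,\cdot)_t$) splitting $\widetilde{\mathcal H}_t = \mathcal N^{\perp} \oplus \mathcal N$ to write each $x \in D(t)$ uniquely as $x = x_r + x_s$ with $x_r \in \mathcal N^{\perp}$, $x_s \in \mathcal N$. Define forms on $D(t) \times D(t)$ by
$$t_r(x,y) := (x_r, y_r)_t - (x,y), \qquad t_s(x,y) := (x_s, y_s)_t.$$
Because $\tilde\iota$ annihilates $\mathcal N$, we have $\tilde\iota x_r = \tilde\iota x = x$, and contractivity of $\tilde\iota$ gives $t_r(x,x) = \|x_r\|_t^2 - \|x\|^2 \ge 0$; positivity of $t_s$ is immediate. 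The Pythagorean identity $(x,y)_t = (x_r, y_r)_t + (x_s, y_s)_t$ combined with $(x,y)_t = t(x,y) + (x,y)$ then yields $t_r + t_s = t$ on the common domain $D(t) = D(t_r) = D(t_s)$.

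To see that $t_r$ is closable, take $(x_n) \subset D(t)$ with $x_n \to 0$ in $\mathcal H$ and $t_r(x_n - x_m, x_n - x_m) \to 0$. Since $\|\tilde\iota((x_n)_r - (x_m)_r)\| = \|x_n - x_m\| \to 0$, the sequence $((x_n)_r)$ is $\|\cdot\|_t$-Cauchy in $\mathcal N^{\perp}$; let $y^{\star} \in \mathcal N^{\perp}$ be its limit. Continuity of $\tilde\iota$ forces $\tilde\iota y^{\star} = \lim_n x_n = 0$, hence $y^{\star} \in \mathcal N \cap \mathcal N^{\perp} = \{0\}$, whence $t_r(x_n,x_n) = \|(x_n)_r\|_t^2 - \|x_n\|^2 \to 0$, establishing closability.

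The main obstacle is the maximality assertion: for any closable positive symmetric form $u$ with $u \preceq t$ one must prove $u \preceq t_r$. I would establish this via the variational formula
$$t_r(x,x) = \lim_{n \to \infty} \inf\bigl\{\, t(x - y, x - y) + n\|y\|^2 : y \in D(t) \,\bigr\},$$
which mirrors the minimization selecting $x_r$ inside $\widetilde{\mathcal H}_t$. For any closable $u$ with $u \preceq t$, closability forces $u(\cdot,\cdot)$ to behave continuously along $\|\cdot\|$-null sequences that are $u$-Cauchy; applying $u(x-y,x-y) \le t(x-y,x-y)$ and exploiting this continuity while letting $\|y\| \to 0$ in a controlled way yields $u(x,x) \le t_r(x,x)$ for every $x \in D(t)$. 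The delicate point is the bookkeeping between the three norms $\|\cdot\|$, $\|\cdot\|_t$, and $\|\cdot\|_u$ needed to extract the required inequality from the variational formula while legitimately invoking closability of $u$.
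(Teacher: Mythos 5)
This theorem is quoted from Simon \cite{bs} and the paper gives no proof of it, so your attempt can only be measured against Simon's original argument, which your construction in fact reproduces: the completion $\widetilde{\mathcal H}_t$, the contraction $\tilde\iota$, the splitting along $\mathcal N=\ker\tilde\iota$, and the formulas $t_r(x,y)=(x_r,y_r)_t-(x,y)$, $t_s(x,y)=(x_s,y_s)_t$ are exactly the canonical decomposition. Your verifications of positivity, of $t=t_r+t_s$ on the common domain $D(t)$, and of closability of $t_r$ are correct (modulo one compressed step: the $\|\cdot\|_t$-Cauchyness of $((x_n)_r)$ needs \emph{both} $t_r(x_n-x_m,x_n-x_m)\to 0$ and $\|x_n-x_m\|\to 0$, via $\|(x_n)_r-(x_m)_r\|_t^2=t_r(x_n-x_m,x_n-x_m)+\|x_n-x_m\|^2$; your wording suggests the second condition alone suffices).

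The genuine gap is the maximality assertion, which is the actual content of the second half of the theorem, and you leave it unproved. You assert the variational formula $t_r(x,x)=\lim_n\inf\{t(x-y,x-y)+n\|y\|^2 : y\in D(t)\}$ without justification: the inequality $\le$ is easy (take $y\to x_s$ in $\widetilde{\mathcal H}_t$), but the inequality $\ge$ requires a real argument (boundedness of near-minimizers in $\widetilde{\mathcal H}_t$, weak compactness, and weak lower semicontinuity of $\|\cdot\|_t$ to force weak limit points of the $y_n$ into $\mathcal N$). Moreover, the passage from the formula to $u\preceq t_r$ is only gestured at (``bookkeeping between the three norms''); the clean version is to note that the same formula applied to a closable $u$ returns $u$ itself (since $\mathcal N_u=\{0\}$), and then $u(x,x)=\lim_n\inf_{y\in D(u)}[u(x-y,x-y)+n\|y\|^2]\le\lim_n\inf_{y\in D(t)}[t(x-y,x-y)+n\|y\|^2]=t_r(x,x)$, using $D(t)\subseteq D(u)$ and $u\preceq t$. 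Alternatively, Simon's own route avoids the variational formula entirely: since $u\preceq t$, the identity on $D(t)$ extends to a contraction $j\colon\widetilde{\mathcal H}_t\to\widetilde{\mathcal H}_u$ intertwining $\tilde\iota_t$ and $\tilde\iota_u$; closability of $u$ makes $\tilde\iota_u$ injective, so $j$ annihilates $\mathcal N$, whence $u(x,x)+\|x\|^2=\|j(x_r)\|_u^2\le\|x_r\|_t^2=t_r(x,x)+\|x\|^2$. As it stands, your proposal establishes the decomposition but not that $t_r$ is the largest closable form below $t$.
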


\begin{remark}\label{sireg}
According to  Theorem \ref{bs}, for every positive symmetric bilinear form $t,$  the components $t_r$ and $t_s$ are said to be the {\it regular} and {\it singular} part of $t,$ respectively.

If $t_{s} = o,$  the bilinear form  $t$ is said to be {\it regular},    and if $t_{r}=o,$ $t$ is  said to be {\it singular}. Hence, the bilinear form $o$ is a unique positive symmetric bilinear form which is simultaneously regular and singular.
\end{remark}

In addition, let $t$ be a positive symmetric bilinear form
and  $b$ be  an everywhere defined bilinear form determined by a positive
Hermitian operator $B$ on $\mathcal H$, i.e., $b(x,y) = (Bx,y)$, $x,y \in \mathcal H$. Define $t + b$
on $D(t)$. Then \cite[Cor 2.3]{bs}
\begin{equation}
(t+b)_r = t_r + b;\quad (t + b)_s = t_s.
\label{eq:bound}
\end{equation}
Moreover, if $s\preceq  t$, then
\begin{equation}\label{eq:reg}
s_r \preceq t_r.
\end{equation}

We notice that for singular parts the statement if ``$s\preceq t$, then $s_s \preceq t_s$" does not hold, in general.

The following example presents a nonzero everywhere  defined positive singular symmetric bilinear form, see also \cite[Ex 1.6.14]{glea}.

\begin{example}
Let $\{e_n\}$ be an orthonormal basis of an infinite-dimensional
Hilbert space $\mathcal H$ which is a part of a Hamel basis $\{e_i\}_{i \in I}$ of $\mathcal H$
consisting of unit vectors from $\mathcal H$. Fix an element $e_{i_0}$, $i_0 \in I,$
which does not belong to the orthonormal basis $\{e_n\}$, and define a linear operator $T$ on $\mathcal H$ by

\begin{equation}
T(\sum_i \alpha_i \,e_i) = \alpha_{i_0} \, e_{i_0},
\end{equation}
where $\alpha_{i_0}$ is the scalar corresponding to $e_{i_o}$ in the decomposition $x = \sum_i \alpha_i \,e_i$, $x \in \mathcal H$, with  respect to the given Hamel basis. Then $T$ is an everywhere defined unbounded linear operator.

If we define a bilinear form $t$ with $D(t) = \mathcal H$ via
\begin{equation}
t(x,y) = (Tx,Ty),\ x,y \in \mathcal H,
\end{equation}
then $t$ is a nonzero everywhere  defined positive singular symmetric bilinear form.
\end{example}

A criterion  of singular bilinear forms, \cite{Lug}, see also \cite[Lem 1.6.13]{glea}, says: A positive symmetric bilinear form $t$ with a dense  domain $D(t)$ in  $\mathcal H$ is singular
if and only if, for any non-zero $x \in \mathcal H,$ there exists a vector $y \in D(t)$ such that
\begin{equation}
t(y,y) < |(x,y)|^2.
\end{equation}

Previous Theorem \ref{bs} is crucial. From Theorem \ref{cc} we know that every positive bilinear form which can be created by some densely defined positive linear operator
is closable. Hence, there do not exist positive linear operators that can define  singular bilinear forms different of $o,$ and the class of all positive bilinear forms is much richer than the class of positive operators.

\begin{definition}
Let ${\mathcal H}$ be an infinite-dimensional complex Hilbert space. Let us define following subsets of $\Vp$:
\begin{description}

\item $\B = \{t \mid t \in \Vp, t \text{ is bounded}\}$,
\item $\R = \{t \mid t \in \Vp, t \text{ is regular}\}$,
\item $\S = \{t \mid t \in \Vp, t \text{ is singular}\}$,
\item $\G = \{t \mid t \in \Vp, \exists \, A \in \VH \text{ such that } D(A) = D(t), t(x,y) = (Ax,y) \text{ for all } x,y \in D(t)\}$,
\item $\C = \{t \mid t \in \Vp, t \text{ is closed}\}$.

\end{description}
\end{definition}

\begin{theorem}\label{subbound}
Let ${\mathcal H}$ be an infinite-dimensional complex Hilbert space.  Then $(\B; \oplus_{\mid \B},o)$
is a sub-generalized effect algebra of the generalized effect algebra $(\Vp; \oplus,o).$
\end{theorem}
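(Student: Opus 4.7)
The plan is to verify the two defining conditions of a sub-generalized effect algebra from the definition given right after Remark \ref{subrem}: namely, that $o\in\B$, and that whenever $r,s,t\in\Vp$ satisfy $r\oplus s=t$ with at least two of $r,s,t$ lying in $\B$, the third also lies in $\B$. The first condition is immediate, since $o$ is everywhere defined with numerical range $\{0\}$, hence bounded.

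For the closure condition, there are three cases to check. First, suppose $r,s\in\B$. Both are bounded with $D(r)=D(s)=\mathcal H$, so $r\oplus s$ is defined in $\Vp$ (since $r$, say, is bounded) and $D(r\oplus s)=D(r)\cap D(s)=\mathcal H$. By Lemma \ref{nr} the numerical range $R(r+s)$ is contained in $\{x+y:x\in R(r),\,y\in R(s)\}$, which is bounded above because both $R(r)$ and $R(s)$ are. Hence $t=r\oplus s$ is bounded, so $t\in\B$.

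Next, suppose $r,t\in\B$ and $s\in\Vp$. Since $r$ is bounded, $r\oplus s=t$ means $t(x,y)=r(x,y)+s(x,y)$ on $D(t)=D(r)\cap D(s)=D(s)$. Because $t$ is bounded we have $D(t)=\mathcal H$, forcing $D(s)=\mathcal H$. Positivity of $r$ gives, for every $x\in\mathcal H$ with $\|x\|=1$,
\[
0\le s(x,x)=t(x,x)-r(x,x)\le t(x,x)\le \sup R(t),
\]
so $R(s)$ is bounded above, i.e.\ $s$ is bounded. Combined with $D(s)=\mathcal H$, this yields $s\in\B$. The third case, $s,t\in\B$, is handled identically by the commutativity (GEi) of $\oplus$, or by the same computation with the roles of $r$ and $s$ interchanged.

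I do not expect a serious obstacle here: the only subtle point is to make sure, in the second and third cases, that one uses positivity of the ``other'' summand to bound the numerical range of the unknown form by that of $t$, and that the domain identifications in the definition of $\oplus$ (combined with boundedness forcing $D=\mathcal H$ inside $\Vp$) pin down the domain of the third form as $\mathcal H$. Once these observations are in place, every condition in the definition of a sub-generalized effect algebra is verified directly.
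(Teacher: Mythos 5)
Your proof is correct and follows essentially the same route as the paper's: the case $r,s\in\B$ is handled via Lemma \ref{nr}, and the case where the sum and one summand are bounded uses positivity of the other summand to bound the numerical range by $\sup R(t)$. You merely add the (harmless, and in fact welcome) explicit checks that $o\in\B$ and that the domain of the third form is forced to be $\mathcal H$.
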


\begin{proof}
Let $r, s \in \B$, then we have defined $r\oplus s = r+s =t$. Since both $r$ and $s$ are bounded, with Lemma \ref{nr} so is $t$.
Whenever $r+s=t$ for some $t,r \in \B, s\in \Vp$, then
$s(x,x) = t(x,x) - r(x,x)$ for all $x \in \mathcal{H}$. Since $r$ is positive, we have $s(x,x)\leq t(x,x)$ for all $x \in \mathcal{H}$.
This gives $n_s \leq n_t$ for suprema $n_s, n_t$ of the numerical ranges of $s$ and $t$, hence $s$ is bounded.
\end{proof}

\begin{theorem}\label{regg}
Let ${\mathcal H}$ be an infinite-dimensional complex Hilbert space.  Then $(\R; \oplus_{\mid \R},o)$ is a generalized effect
algebra, but it is not a sub-generalized effect algebra of the generalized effect algebra $(\Vp; \oplus,o).$
\end{theorem}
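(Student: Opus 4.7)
My plan is to split along the two assertions. For the positive claim that $(\R; \oplus_{\mid \R}, o)$ is a generalized effect algebra, I invoke Remark \ref{subrem}: it suffices to check $o \in \R$ (clear) and that $\R$ is closed under the operation $\oplus$ inherited from $\Vp$. Given $r, s \in \R$ with $r \oplus s$ defined in $\Vp$, if one of them—say $s$—is bounded (hence everywhere defined), formula (\ref{eq:bound}) with $t = r$, $b = s$ gives $(r+s)_s = r_s = o$, so $r + s$ is regular. Otherwise $r, s$ are both unbounded with $D(r) = D(s)$, and by Theorem \ref{bs} both are closable. I then use the elementary fact that a sum of two positive closable symmetric forms with common dense domain is closable: if $x_n \to 0$ in $\mathcal H$ with $(r+s)(x_n - x_m, x_n - x_m) \to 0$, positivity forces each of $r(x_n - x_m, x_n - x_m), s(x_n - x_m, x_n - x_m) \to 0$, closability of each component gives $r(x_n, x_n), s(x_n, x_n) \to 0$, and hence $(r+s)(x_n, x_n) \to 0$.

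For the negative assertion, by Remark \ref{subrem} I must exhibit $a, b \in \R$ with $a \le_\oplus b$ in $\Vp$ while the unique $c \in \Vp$ with $a \oplus c = b$ fails to lie in $\R$. Formula (\ref{eq:bound}) rules out both $a$ bounded and $b$ bounded (each forces $c$ regular), so I need $a, b$ both unbounded regular with $D(a) = D(b)$ and $c$ unbounded non-regular on that domain. My construction: take $\mathcal H = \ell^2$ with ONB $\{e_n\}_{n \ge 1}$, $D = c_{00}$ (the finitely supported sequences). Let $a(x, y) = \sum_n n^2 x_n \overline{y_n}$ on $D$; it is generated by the positive operator $A e_n = n^2 e_n$, so by Theorem \ref{cc}, $a$ is closable, hence $a \in \R$. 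Define the discontinuous linear functional $f(x) = \sum_n x_n$ on $D$ and the positive symmetric form $t(x, y) = f(x)\overline{f(y)}$ on $D$. Singularity of $t$ follows from Lugovaja's criterion: since $\{e_j - e_k : j \ne k\} \subset \ker f \cap D$ and the $\ell^2$-orthogonal complement of this set consists only of constant sequences (hence $\{0\}$), the intersection $\ker f \cap D$ is dense in $\mathcal H$, so for any $z \ne 0$ one picks $y \in \ker f \cap D$ with $(z, y) \ne 0$, yielding $t(y, y) = 0 < |(z, y)|^2$.

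The crux is closability of $b := a + t$. The Cauchy--Schwarz estimate
\begin{equation*}
|f(x)|^2 = \Bigl|\sum_n \tfrac{1}{n}(n x_n)\Bigr|^2 \le \Bigl(\sum_n \tfrac{1}{n^2}\Bigr)\Bigl(\sum_n n^2 |x_n|^2\Bigr) = \tfrac{\pi^2}{6}\, a(x, x)
\end{equation*}
form-bounds $t$ by $a$ on $D$. Hence if $x^{(k)} \to 0$ in $\mathcal H$ with $b(x^{(k)} - x^{(l)}, x^{(k)} - x^{(l)}) \to 0$, positivity plus closability of $a$ give $a(x^{(k)}, x^{(k)}) \to 0$, the form-bound yields $t(x^{(k)}, x^{(k)}) \to 0$, whence $b(x^{(k)}, x^{(k)}) \to 0$; so $b$ is closable and $b \in \R$. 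Now $a, t \in \Vp$ are both unbounded with $D(a) = D(t) = D$, hence $a \oplus t = b$ in $\Vp$; but $t \notin \R$ (singular, nonzero), so $\R$ is not a sub-generalized effect algebra of $\Vp$. The main obstacle is engineering $t$ so that $f$ is simultaneously unbounded (making $t$ singular) and form-controlled by $a$ (keeping $b = a + t$ closable); the choice $f(x) = \sum x_n$ works precisely because $\sum n^{-2} < \infty$ while $\sum 1 = \infty$.
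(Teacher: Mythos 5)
Your proof is correct and reaches the same two conclusions as the paper, but by a noticeably more self-contained route. For the positive assertion the paper disposes of everything in one line by citing the fact that a sum of closable forms is closable \cite[Thm VI.1.31]{kato} and then applying Remark \ref{subrem}; you instead reprove the common-domain case by hand via the sequential criterion for closability (this equivalence with ``has a closed extension'' is \cite[Thm VI.1.17]{kato} and is worth citing explicitly, since the paper defines closability only through closed extensions) and dispatch the bounded-summand case with formula (\ref{eq:bound}). Both versions rest on the identification of ``regular'' with ``closable'', which indeed follows from Theorem \ref{bs}. For the negative assertion the paper merely cites \cite[Cor 2.3, Rem (2)]{bs} for the existence of nonzero $t,r\in\R$ and $s\in\S$ with $t\oplus s=r$, whereas you build an explicit witness: $a$ generated by $Ae_n=n^2e_n$ on $c_{00}$, the rank-one form $t=f(\cdot)\overline{f(\cdot)}$ coming from the discontinuous functional $f(x)=\sum_n x_n$, and $b=a+t$. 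Your verifications are sound: $t$ is singular by Lugovaya's criterion because $\ker f\cap c_{00}$ contains the total set $\{e_j-e_k\}$, and $b$ is closable because the two-sided estimate $a\le b\le(1+\pi^2/6)\,a$ lets the sequential closability criterion pass from $a$ to $b$ (no KLMN-type relative bound below $1$ is needed, only equivalence of the two forms). The paper's citation is shorter; your construction buys a concrete example and makes visible why the canonical decomposition $b=b_r+b_s$ with $b_s=o$ coexists with a different splitting of $b$ into a closable part plus a nonzero singular part, which is exactly the phenomenon Simon's remark records.
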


\begin{proof} The sum of two closable bilinear forms is always closable \cite[Thm VI.1.31]{kato}. Hence, by Remark \ref{subrem}, $(\R; \oplus_{\mid \R},o)$ is a generalized effect
algebra. According to \cite[Cor 2.3., Rem (2)]{bs}, there exist nonzero bilinear forms $t,r \in \R$, $s \in \S$, $D(t) = D(s) = D(r)$ such that
$t \oplus s = r$, but we have $\R \cap \S = \{ o \}$.
\end{proof}

\begin{theorem}
Let ${\mathcal   H}$ be an infinite-dimensional complex Hilbert space.  Then $(\G;\oplus_{\mid \G},o)$ is a sub-generalized effect algebra of $(\Vp; \oplus, o)$ and is isomorphic to the generalized effect algebra $(\VH; \oplus_{\dom},0).$
\end{theorem}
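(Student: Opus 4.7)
The plan is to exhibit the isomorphism via the natural map $\varphi: \VH \to \G$ sending an operator $A$ to the bilinear form $t_A$ defined by $t_A(x,y) := (Ax,y)$ for $x,y \in D(A)$. By the definition of $\G$ every bilinear form in $\G$ arises in this way, so $\varphi$ is onto; injectivity follows from density of $D(A)$, since $t_A = t_B$ forces $D(A) = D(B)$ and $(Ax - Bx, y) = 0$ for all $y$ in a dense subspace, hence $Ax = Bx$. The null operator $O$ is sent to the bilinear form $o$, and everything else reduces to verifying that $\varphi$ transports the partial operation and that $\G$ is closed under it inside $\Vp$.

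The key intermediate fact is that $A \in \VH$ is bounded if and only if $t_A$ is bounded, in which case $D(A) = D(t_A) = \mathcal H$: one direction is Cauchy--Schwarz, while the other combines Lemma \ref{funct} (as in the proof of Lemma \ref{exbil}) with Theorem \ref{ext} to produce a unique bounded everywhere-defined operator representing any bounded form. Because the definedness clauses for $\oplus_{\dom}$ on $\VH$ and $\oplus$ on $\Vp$ have identical shape --- "one operand is bounded, or the two operand domains coincide" --- and both produce pointwise sums, $\varphi$ automatically intertwines the partial operations: whenever $A \oplus_{\dom} B$ is defined, $t_{A+B}(x,y) = ((A+B)x,y) = (Ax,y) + (Bx,y) = t_A(x,y) + t_B(x,y)$ on the correct common domain.

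With $\varphi$ recognized as a bijective morphism, the sub-generalized effect algebra claim amounts to showing that whenever $t \oplus s = u$ in $\Vp$ with two of $\{t,s,u\}$ in $\G$, the third also lies in $\G$. The case $t, s \in \G$ is immediate from the previous paragraph. For the case $t,u \in \G$ (with the case $s, u \in \G$ symmetric by commutativity), write $t = t_A$, $u = t_C$ with $A, C \in \VH$; then on $D(t) \cap D(s)$ one has $s(x,y) = ((C-A)x,y)$, and the operator $C - A$ is positive because $s$ is, and densely defined. Inspecting the three possible reasons for $t \oplus s$ to be defined, the candidate $C - A$ already lies in $\VH$ and generates $s$, except in the sub-case where $s$ is bounded but $t$ is unbounded: there $C - A$ sits only on $D(A) \subsetneq \mathcal H$, and one invokes Theorem \ref{ext} to extend it to the unique bounded operator on $\mathcal H$ generating $s$.

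I expect this last sub-case to be the principal piece of bookkeeping, since a naive identification $s \leftrightarrow C - A$ would produce a form generated by an operator whose domain is too small to lie in $\VH$, and the bounded-extension machinery is what keeps the candidate generator inside $\VH$. Once this is handled, one concludes $s \in \G$ in every situation, and so $\G$ is a sub-generalized effect algebra of $\Vp$ and $\varphi$ is the claimed isomorphism onto $(\VH; \oplus_{\dom}, O)$.
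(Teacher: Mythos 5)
Your proof is correct and follows essentially the same route as the paper: the natural correspondence $A \mapsto t_A$ between $\VH$ and $\G$, together with a case analysis on boundedness to show $\G$ is closed under $\oplus$ and differences inside $\Vp$. You are in fact more careful than the paper in the subtraction direction (the paper dismisses the case $t, u \in \G$, $s \in \Vp$ with ``in the same way''), correctly isolating the sub-case where $s$ is bounded and $t$ unbounded, in which the candidate generator $C-A$ lives only on $D(A)$ and must be extended via Theorem \ref{ext} (with Lemma \ref{exbil} guaranteeing that the extension still generates $s$ on all of $\mathcal H$).
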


\begin{proof} Clearly $o(x,y) = (Ox,y)$. Let us have $t, r \in \G$, $s \in \Vp$ such that $t \oplus r = s$. Then $D(t) \cap D(r) = D(s)$ and there exist $A, B \in \VH$ such that
$D(A) = D(t), D(B) = D(r)$ and $s(x,y) = t(x,y)+r(x,y)=(Ax,y)+(Bx,y)=((A+B)x,y)$ for all $x,y \in D(s)$, hence $s \in \G$. In the same way we deal with the case when $t,s \in \G, r \in \Vp$.
An isomorphism can be seen immediately from definitions.
\end{proof}

\begin{theorem}\label{closed}
Let ${\mathcal H}$ be an infinite-dimensional complex Hilbert space.  Then $(\C; \oplus_{\mid \C},o)$
is a generalized effect algebra.
\end{theorem}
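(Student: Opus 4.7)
The plan is to invoke Remark \ref{subrem}: to prove that $(\C;\oplus_{\mid\C},o)$ is a generalized effect algebra, it suffices to verify (a) $o\in\C$ and (b) whenever $t,s\in\C$ and $t\oplus s$ is defined in $\Vp$, the form $t\oplus s$ is again closed. For (a), since $m_o=0$, the identity $(x,y)_o = o(x,y)+(1+m_o)(x,y)=(x,y)$ holds on $D(o)=\mathcal H$, so $(D(o),(\cdot,\cdot)_o)$ is a Hilbert space, i.e.\ $o$ is closed.

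For (b) I would split according to the three cases in the definition of $\oplus$ on $\Vp$. If $t$ and $s$ are both bounded, then by Theorem \ref{subbound} so is $t\oplus s$ with $D(t\oplus s)=\mathcal H$; its form-norm is then equivalent to $\|\cdot\|$, so closedness is automatic. If exactly one of them, say $t$, is bounded (so $D(t)=\mathcal H$) and $s$ is closed and unbounded, then $D(t\oplus s)=D(s)$, and the bound $0\le t(x,x)\le c\|x\|^2$ shows that $(\cdot,\cdot)_{t\oplus s}$ and $(\cdot,\cdot)_s$ induce mutually equivalent norms on $D(s)$; completeness transfers from $s$ to $t\oplus s$.

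The delicate case is when both $t$ and $s$ are unbounded and closed, so $D(t)=D(s)=D(t\oplus s)$. The key point is that, for positive forms, $m_{t\oplus s}\ge m_t+m_s\ge m_t$, which together with $s(x,x)\ge 0$ yields
\begin{equation*}
(x,x)_{t\oplus s}-(x,x)_t = s(x,x) + (m_{t\oplus s}-m_t)\|x\|^2 \ge 0,
\end{equation*}
and symmetrically $(x,x)_s\le (x,x)_{t\oplus s}$, as well as $\|x\|^2\le (x,x)_{t\oplus s}$. Hence any sequence Cauchy in $(\cdot,\cdot)_{t\oplus s}$ is Cauchy in each of $(\cdot,\cdot)_t$, $(\cdot,\cdot)_s$ and the original $\mathcal H$-norm. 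Closedness of $t$ and $s$ produces limits in $D(t)$ and $D(s)$, both of which must coincide with the $\mathcal H$-limit $x$, so $x\in D(t)\cap D(s)=D(t\oplus s)$; combining the three convergences via
\begin{equation*}
(x_n-x,x_n-x)_{t\oplus s}=t(x_n-x,x_n-x)+s(x_n-x,x_n-x)+(1+m_{t\oplus s})\|x_n-x\|^2
\end{equation*}
shows $x_n\to x$ in $(\cdot,\cdot)_{t\oplus s}$, so $t\oplus s\in\C$. The main technical obstacle is the bookkeeping of the three shifts $1+m_t,1+m_s,1+m_{t\oplus s}$; positivity forces them into $[1,\infty)$ and guarantees $m_{t\oplus s}\ge m_t+m_s$, so the required norm comparisons fall out cleanly.
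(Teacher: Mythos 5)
Your proposal is correct and follows the same skeleton as the paper's proof: reduce to Remark \ref{subrem} by checking that $o$ is closed and that $\oplus$-sums of closed forms stay closed. The only difference is that the paper disposes of the latter fact by citing \cite[Thm VI.1.31]{kato}, whereas you prove it by hand; your argument (norm domination $(x,x)_t,(x,x)_s,\|x\|^2\le (x,x)_{t\oplus s}$ via $m_{t\oplus s}\ge m_t+m_s\ge 0$, transfer of Cauchyness, identification of the three limits in $\mathcal H$) is a correct, self-contained rederivation of that theorem in the positive case, so nothing is missing.
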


\begin{proof}
We have $o$ is bounded and closed. According to \cite[Thm VI.1.31]{kato}, whenever $t_1, t_2 \in \C$ such that $t_1 \oplus t_2$ is defined,
also $t_1 \oplus t_2 \in \C,$ hence by Remark \ref{subrem}, $(\C; \oplus_{\mid \C},o)$
is a generalized effect algebra.
\end{proof}

Let us denote by $\mathcal{SA(H)} $ the set of all positive self-adjoint operators on $\mathcal H$ (clearly $\mathcal{SA(H)} \subseteq \VH$).

\begin{theorem}\label{one}{\rm \cite[Thm VIII.15]{rs}}
Let $t \in \Vp$ be a closed positive bilinear form. Then there exists a unique positive self-adjoint operator $A_t \in \mathcal{SA(H)}$ such that $t$ corresponds to $A_t$.
\end{theorem}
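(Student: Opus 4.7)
The plan is to adapt the standard Riesz-representation argument (Kato's second representation theorem). Since $t$ is positive, I set $(x,y)_t:=t(x,y)+(x,y)$ on $D(t)$; by the closedness hypothesis, $(D(t),(\cdot,\cdot)_t)$ is a Hilbert space, which I denote $\mathcal H_t$. For each fixed $y\in\mathcal H$, the functional $x\mapsto(x,y)$ is bounded on $\mathcal H_t$ because $|(x,y)|\le\|x\|\,\|y\|\le\|x\|_t\,\|y\|$, so the Riesz theorem inside $\mathcal H_t$ yields a unique $By\in D(t)$ with
\[
(x,y)=t(x,By)+(x,By)\qquad\text{for all } x\in D(t). \qquad(*)
\]

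Next I would verify that $B:\mathcal H\to D(t)\subseteq\mathcal H$ is linear (by uniqueness in Riesz), bounded with $\|B\|\le 1$ (take $x=By$ in $(*)$ to get $\|By\|_t^2=(By,y)$), positive (same identity), and self-adjoint (swap $x$ and $y$ in $(*)$). Moreover $B$ is injective: $By=0$ forces $(x,y)=0$ for every $x$ in the dense set $D(t)$, hence $y=0$. Since $B$ is bounded self-adjoint with trivial kernel, $\mathrm{Ran}(B)$ is dense in $\mathcal H$.

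Now define $A_t:=B^{-1}-I$ on $D(A_t):=\mathrm{Ran}(B)$. As the inverse of a bounded, injective, self-adjoint operator with dense range, $B^{-1}$ is a (possibly unbounded) positive self-adjoint operator, and subtracting the bounded $I$ preserves self-adjointness; positivity of $A_t$ follows from $B^{-1}\ge I$ (a consequence of $\|B\|\le 1$). For $x=Bu\in D(A_t)$ and $y\in D(A_t)\subseteq D(t)$, identity $(*)$ gives $t(x,y)+(x,y)=(u,y)$, so $t(x,y)=(u-x,y)=(A_tx,y)$. Thus $D(A_t)\subseteq D(t)$ and $t$ corresponds to $A_t$ in the sense used in this paper.

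For uniqueness, suppose $A'\in\mathcal{SA(H)}$ also corresponds to $t$. Then on $D(A')$ the form $t$ coincides with $(A'\cdot\,,\cdot)$, while on $D(A_t)$ it coincides with $(A_t\cdot\,,\cdot)$; the first representation theorem (also in Reed--Simon) asserts that a positive self-adjoint operator is determined by its associated closed positive form, which forces $A'=A_t$. I expect the main obstacle to be verifying self-adjointness of $A_t$ rather than mere symmetry: one has to check carefully that the inverse of the bounded, injective, self-adjoint $B$ produces an honest self-adjoint operator on $\mathrm{Ran}(B)$, and that the representing identity $t(x,y)=(A_tx,y)$ established initially on $D(A_t)\times D(A_t)$ extends to $D(A_t)\times D(t)$ via $(*)$ so that the correspondence in the sense of the paper is genuinely satisfied.
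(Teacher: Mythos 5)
The paper itself offers no proof of this statement---it is imported verbatim from Reed--Simon, Theorem VIII.15---so the only comparison available is with the standard textbook argument, which is exactly the route you take. Your existence construction is sound: for positive $t$ your inner product $t(x,y)+(x,y)$ is equivalent to the paper's $(x,y)_t=t(x,y)+(1+m_t)(x,y)$, so closedness does give completeness of $\mathcal H_t$; the Riesz argument produces a bounded, positive, injective, self-adjoint $B$ with $\|B\|\le 1$ and dense range; $A_t:=B^{-1}-I$ is positive and self-adjoint on $\mathrm{Ran}(B)$; and $(*)$ yields $t(x,y)=(A_tx,y)$ for all $x\in D(A_t)$ and all $y\in D(t)$, which is more than the paper's notion of ``corresponds'' demands. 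All of these steps check out.

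The genuine gap is uniqueness. As written, your final paragraph invokes ``the first representation theorem \dots\ a positive self-adjoint operator is determined by its associated closed positive form''---but that determination \emph{is} the uniqueness assertion being proved, so the argument is circular. Worse, uniqueness is actually false for the paper's literal definition of ``corresponds'' ($D(A)\subseteq D(t)$ and $t(x,y)=(Ax,y)$ only for $x,y\in D(A)$): on $\mathcal H=L^2(0,1)$ take $t(u,v)=\int_0^1 u'\,\overline{v'}\,{\rm d}x$ with $D(t)=H^1(0,1)$, a closed positive form. Both the Neumann Laplacian and the Dirichlet Laplacian are positive self-adjoint, have domains contained in $H^1(0,1)$, and satisfy $t(x,y)=(Ax,y)$ for $x,y$ in their respective domains (the boundary terms $[x'\overline{y}]_0^1$ vanish in each case), yet they are distinct operators. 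So uniqueness can only be proved for the stronger association that your construction actually delivers, namely $t(x,y)=(A_tx,y)$ for all $x\in D(A_t)$, $y\in D(t)$. With that hypothesis the argument closes: if $A'$ satisfies it, then $((A'+I)x,u)=(x,u)_t$ for $x\in D(A')$, $u\in D(t)$; since $A'+I$ maps $D(A')$ onto $\mathcal H$, any $u\in D(t)$ that is $(\cdot,\cdot)_t$-orthogonal to $D(A')$ satisfies $(w,u)=0$ for every $w\in\mathcal H$, hence $u=0$, so $D(A')$ is dense in $\mathcal H_t$; then $(x,(A'+I)^{-1}z)_t=(x,z)=(x,Bz)_t$ for all $x\in D(A')$ forces $(A'+I)^{-1}=B$ and $A'=A_t$. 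You should either supply this argument or state explicitly that uniqueness is meant with respect to the stronger representing identity (equivalently, with $D(A_t)$ a core of $t$), as in Reed--Simon.
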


A closed bilinear form $t$  is said to be
{\it associated} with a unique self-adjoint operator $A_t$ from Theorem \ref{one}.

The previous Theorem \ref{one} is an important result from the theory of Hilbert spaces. In other words, it says that
there is a one-to-one correspondence between positive self-adjoint operators $\mathcal{SA(H)}$ and closed positive bilinear forms $\C$
(for more details, see \cite{rs}, \cite[Thm 4.6.8.]{BEH}).

Moreover, for closed positive bilinear forms $s_1, s_2$, if the usual sum $s_1+s_2 =: s$ is densely defined, i.e. $\overline{D(s)}=\mathcal{ H}$,
then $s$ is closed. According to \cite{BEH}, we can define a {\it form sum} ``$\dot{+}$" for associated self-adjoint operators $A_{s_1}, A_{s_2} \in \mathcal{SA(H)}$
by $A_{s_1} \dot{+} A_{s_2} = A_{s}$.

This allows us to extend our result about closed positive bilinear forms $\C$ to the set of all positive
self-adjoint operators $\mathcal{SA(H)}$ as follows.

\begin{proposition}
Let ${\mathcal H}$ be an infinite-dimensional complex Hilbert space.
 Let us define a partial operation $\dot{\oplus}$ on $\mathcal{SA(H)}$ by

\begin{description}
\item for $A_{s_1},A_{s_2} \in \mathcal{SA(H)}$, $A_{s_1} \dot{\oplus} A_{s_2}$ is defined if and only if $s_1 \oplus s_2$ is defined and then
$A_{s_1} \dot{\oplus} A_{s_2} = A_{s_1 + s_2}$,
\end{description}
\noindent where $s_1, s_1, s_1 + s_2$ are positive closed bilinear forms associated to positive self-adjoint linear operators $A_{s_1}$, $A_{s_2}$ and $A_{s_1 + s_2}$.

Then $(\mathcal{SA(H)};\dot{\oplus},O)$ is a generalized effect algebra.
\end{proposition}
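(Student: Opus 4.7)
The plan is to transfer the generalized effect algebra structure from $(\C; \oplus_{\mid \C}, o)$, established in Theorem \ref{closed}, to $(\mathcal{SA(H)}; \dot{\oplus}, O)$ along the bijection of Theorem \ref{one}. Write $\Phi : \mathcal{SA(H)} \to \C$ for the map sending each positive self-adjoint operator $A$ to its associated closed positive bilinear form $t_A$; then $\Phi$ is a bijection and $\Phi(O) = o$.

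First I would verify that $\dot{\oplus}$ is well-defined. Suppose $s_1, s_2 \in \C$ and $s_1 \oplus s_2$ is defined in $\Vp$. By the definition of $\oplus$ on $\Vp$, either one of $s_1, s_2$ is bounded (in which case $D(s_1) \cap D(s_2)$ equals the dense domain of the other) or $D(s_1) = D(s_2)$; in both cases the usual sum $s := s_1 + s_2$ is densely defined. By the observation quoted just before the proposition (from \cite{BEH}), $s$ is then closed, so Theorem \ref{one} produces a unique $A_s \in \mathcal{SA(H)}$. Hence $A_{s_1} \dot{\oplus} A_{s_2} = A_s$ really lies in $\mathcal{SA(H)}$, and $\Phi(A_{s_1} \dot{\oplus} A_{s_2}) = s_1 + s_2 = \Phi(A_{s_1}) \oplus_{\mid \C} \Phi(A_{s_2})$. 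Conversely, if $s_1 \oplus s_2$ is undefined in $\Vp$ then $\dot{\oplus}$ is undefined by stipulation, so $\Phi$ is an isomorphism of partial algebras.

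Consequently, all five axioms (GEi)--(GEv) for $(\mathcal{SA(H)}; \dot{\oplus}, O)$ follow from the corresponding axioms for $(\C; \oplus_{\mid \C}, o)$ by direct transport through $\Phi$: commutativity, associativity, cancellation, and positivity are preserved verbatim, and neutrality of $O$ comes from $\Phi(O) = o$ together with neutrality of $o$. The only genuinely delicate point is the well-definedness step, where one relies on the fact that a densely defined sum of two closed positive forms is again closed; once that is secured, the remainder of the argument is a routine transport of structure along a bijection.
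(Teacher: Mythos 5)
Your proposal is correct and follows essentially the same route as the paper, which simply cites Theorem \ref{closed} and Theorem \ref{one} and leaves the transport of structure implicit; you have merely spelled out the well-definedness step (densely defined sums of closed positive forms are closed, so $A_{s_1+s_2}$ exists and the condition ``$s_1\oplus s_2$ defined'' agrees with the condition defining $\oplus_{\mid \C}$), which is exactly the content the paper's one-line proof relies on.
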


\begin{proof}
The statement follows from Theorem \ref{closed} and Theorem \ref{one}.
\end{proof}

\begin{lemma}\label{restlem}
Let ${\mathcal   H}$ be an infinite-dimensional complex Hilbert space. For any $t, s \in \Vp$ such that $t \oplus s$ is defined, the following statement hold:

\begin{enumerate}
\item[\rm(i)\phantom{ii}] If $t$ is bounded, then $t = t_{r}.$
\item[\rm(ii)\phantom{i}] $(t + s)_{r}=t_{r} + s_{r}$ if and only if $(t+s)_{s} = t_{s}+s_{s}.$
\item[\rm(iii)] $t_{r} + s_{r} \leq (t+s)_{r}.$
\item[\rm(iv)] $t = s$ if and only if $t_r = s_r$ and $t_s = s_s.$
\end{enumerate}
\end{lemma}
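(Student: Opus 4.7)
The plan is to handle the four claims in order, relying on Theorem~\ref{bs} (existence, domains, and maximality of the regular part), equation~(\ref{eq:bound}), Theorem~\ref{cc}, and the closability of sums of closable forms \cite[Thm VI.1.31]{kato} already used in Theorem~\ref{regg}.

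For (i), I would observe that a bounded positive bilinear form $t$ has $D(t)=\mathcal H$ and, by Proposition~\ref{extbeh}, is generated by a bounded positive operator $A$ via $t(x,y)=(Ax,y)$. Theorem~\ref{cc} then yields that $t$ is closable, and the defining maximality of $t_r$ in Theorem~\ref{bs} forces $t_r=t$, hence $t_s=o$. A quick alternative is to apply~(\ref{eq:bound}) with the symmetric form taken to be $o$ and $b:=t$: then $(o+t)_r = o_r + t = t$, giving the same conclusion.

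For (ii), my approach is to exploit the two decompositions of $t+s$ available on the common dense domain $D(t+s)$, since by Theorem~\ref{bs} we have $D(t_r)=D(t_s)=D(t)$ and $D(s_r)=D(s_s)=D(s)$ (in the bounded cases, the bounded summand is harmlessly restricted to the smaller domain):
\begin{equation*}
(t+s)_r + (t+s)_s \;=\; t+s \;=\; (t_r+s_r) + (t_s+s_s).
\end{equation*}
Rearranging in the vector space of bilinear forms on $D(t+s)$ gives
\begin{equation*}
(t+s)_r - (t_r+s_r) \;=\; (t_s+s_s) - (t+s)_s,
\end{equation*}
so one side vanishes precisely when the other does, which is the required equivalence.

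For (iii), I would first use \cite[Thm VI.1.31]{kato} to conclude that $t_r+s_r$ is closable, and then note that $t_r\preceq t$ and $s_r\preceq s$ imply $t_r+s_r\preceq t+s$ on $D(t+s)$. The maximality of $(t+s)_r$ among closable forms dominated by $t+s$ then yields $t_r+s_r\preceq (t+s)_r$. I expect this last step, the passage from $\preceq$ to the generalized effect-algebra order $\leq_\oplus$ of $(\Vp;\oplus,o)$, to be the only genuine obstacle: by Remark~\ref{remord} it requires the domain identity $D(t_r+s_r)=D((t+s)_r)$, which does hold since both are $D(t+s)$. Hence the witness $r:=(t+s)_r-(t_r+s_r)$ is a positive bilinear form with dense domain, so $r\in\Vp$ and $(t_r+s_r)\oplus r=(t+s)_r$, delivering $t_r+s_r\leq(t+s)_r$ as asserted. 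Finally, for (iv) the forward direction is immediate, while the converse follows from $t_r=s_r \Rightarrow D(t)=D(t_r)=D(s_r)=D(s)$, whence $t=t_r+t_s=s_r+s_s=s$ on this common domain.
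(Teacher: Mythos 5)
Your proposal is correct and follows essentially the same route as the paper: (ii) by comparing the two decompositions of $t+s$, (iii) by closability of $t_r+s_r$ plus the maximality of the regular part from Theorem~\ref{bs} (the paper phrases this via (\ref{eq:reg})), and (i), (iv) directly from the definitions. Your extra care in (iii) about passing from $\preceq$ to $\leq_\oplus$ via the domain identity $D(t_r+s_r)=D((t+s)_r)$ and Remark~\ref{remord} is a point the paper leaves implicit, but it is the same argument.
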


\begin{proof}
(i) follows from the definition. For (ii), let $t,s \in \Vp$, $t \oplus s$ be defined and $(t + s)_{r}=t_{r} + s_{r}$. Then $(t_{r} + t_{s}) + (s_{r} + s_{s}) = t + s = (t + s)_{r} + (t + s)_{s} = t_{r} + s_{r} + (t + s)_{s}$ hence
$(t + s)_{s} = t_{s} + s_{s}$. For (iii) recall from Theorem \ref{regg} that the usual sum $t_r + s_r$ of two regular bilinear forms $t_r, s_r$ is regular. Hence, we have $t_r + s_r = (t_{r} + s_{r})_{r} \leq (t + s)_{r}$. (iv) follows
from the definition of the regular and singular parts.
\end{proof}

\begin{theorem}\label{rest}

Let ${\mathcal   H}$ be an infinite-dimensional complex Hilbert space.  Let us define a partial operation $\op$ on $\Vp$ by
\begin{description}
\item for  $t,s \in \Vp$, $t \op s$ is defined if and only if $t \oplus s$ is defined and $(t + s)_{r} = t_{r} + s_{r}$ and then
$t \op s = t \oplus s$.
\end{description}
Then $(\Vp;\op,o)$ is a generalized effect algebra.
\end{theorem}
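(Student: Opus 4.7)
The strategy is to verify the five axioms of a generalized effect algebra for $(\Vp;\op,o)$, exploiting the fact that $\op$ is a restriction of $\oplus$ and that $(\Vp;\oplus,o)$ is already known to be a generalized effect algebra. The axioms (GEi), (GEiii), (GEiv), and (GEv) will follow almost formally; the substance lies in (GEii).

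For commutativity (GEi) the defining condition $(t+s)_r = t_r+s_r$ is symmetric in $t,s$ since the usual sum is commutative, so $t\op s$ defined implies $s\op t$ defined with the same value. For (GEiii), $o$ is bounded, hence $o_r=o$ by Lemma~\ref{restlem}(i); then (\ref{eq:bound}) gives $(t+o)_r = t_r = t_r + o_r$, so $t\op o$ is always defined and equals $t$. Cancellation (GEiv) and positivity (GEv) reduce immediately to their counterparts for $(\Vp;\oplus,o)$, since $t\op r = t\op s$ forces $t\oplus r = t\oplus s$, and $t\op s = o$ forces $t\oplus s = o$.

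The substantive work is associativity (GEii). Assume $(r\op s)\op t$ is defined; unpacking the definition, $r\oplus s$ and $(r+s)\oplus t$ are defined in $\Vp$, $(r+s)_r = r_r + s_r$, and $((r+s)+t)_r = r_r + s_r + t_r$. Associativity of $\oplus$ then supplies the existence of $s\oplus t$ and $r\oplus(s+t)$. Two applications of Lemma~\ref{restlem}(iii) produce the chain
\[
r_r + (s_r + t_r) \;\leq_\oplus\; r_r + (s+t)_r \;\leq_\oplus\; \bigl(r+(s+t)\bigr)_r = (r+s+t)_r = r_r + s_r + t_r,
\]
whose endpoints coincide by associativity of $\oplus$. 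Antisymmetry of $\leq_\oplus$ then yields $r_r \oplus (s+t)_r = r_r \oplus (s_r + t_r)$, and cancelling $r_r$ via (GEiv) for $\oplus$ gives $(s+t)_r = s_r + t_r$. Hence $s\op t$ is defined; substituting back shows $(r+(s+t))_r = r_r + (s+t)_r$, so $r\op(s\op t)$ is defined and equals $(r\op s)\op t$. The reverse implication is handled by the symmetric chain, interchanging the roles of $r$ and $t$.

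The main obstacle will be the careful bookkeeping of which $\oplus$-sums in the above chain are actually defined, so that antisymmetry of $\leq_\oplus$ and the cancellation law can legitimately be invoked. Each required sum is furnished either by the unpacked hypothesis that $(r\op s)\op t$ is defined (via associativity of $\oplus$, which gives the existence of $r_r\oplus(s_r\oplus t_r)$ from that of $(r_r\oplus s_r)\oplus t_r = (r+s)_r\oplus t_r$), or by Lemma~\ref{restlem}(iii), whose $\leq_\oplus$-assertion already presupposes existence of the relevant $\oplus$-sum.
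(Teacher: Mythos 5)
Your proposal is correct and follows essentially the same route as the paper's own proof: the formal axioms are disposed of via the restriction property, and associativity is obtained by squeezing $r_r+(s_r+t_r)\leq r_r+(s+t)_r\leq (r+(s+t))_r$ between two applications of Lemma~\ref{restlem}(iii) and the identity $((r+s)+t)_r=r_r+s_r+t_r$. Your explicit use of cancellation to extract $(s+t)_r=s_r+t_r$ (so that $s\op t$ is defined) only makes explicit a step the paper leaves implicit.
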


\begin{proof}
First, we show the commutativity. Let us have $t,s \in \Vp$ such that $t \op s$ is defined. Then $(s+t)_{r} = (t + s)_{r} = t_{r} + s_{r} = s_{r} + t_{r}$ and $t \oplus s = s \oplus t$ is defined hence
$t \op s = s \op t$.

Let $t,s,u \in \Vp$, $(t \op s) \op u$ be defined. We have $(t \oplus s) \oplus u = t \oplus (s \oplus u)$ is defined. Then from
$(t+s)+u = t+(s+u)$ we get by Lemma \ref{restlem} (iv), $(t+(s+u))_{r} = ((t+s)+u)_{r} = t_{r} + (s_{r} + u_{r})$. Using Lemma \ref{restlem} (iii), the inequalities
$t_{r} + (s_{r} + u_{r}) \leq t_{r} + (s+u)_{r} \leq (t+(s+u))_{r}$ holds. Hence, by the previous facts,
$t_{r} + (s_{r} + u_{r}) = t_{r} + (s+u)_{r} = (t+(s+u))_{r}$. Therefore,
$t \op (s \op u)$ is defined and because $\op$ is a restriction of $\oplus,$ we have
$t \op (s \op u) = (t \op s) \op u$.

Since $t  = t + o$ and $t_r =t_{r} + o_{r} =(t + o)_{r}$ for all $t \in \Vp$, the axiom (Giii) is also satisfied. Axioms (Giv) and (Gv) are
trivial consequences of the fact, that the partial operation $\op$ is the restriction of $\oplus$.
\end{proof}

\begin{theorem}
Let ${\mathcal H}$ be an infinite-dimensional complex Hilbert space. Then $\R$ and $\S$
form sub-generalized effect algebras of the generalized effect algebra $(\Vp; \op,o).$
\end{theorem}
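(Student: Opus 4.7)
The plan is to verify the two conditions in the definition of a sub-generalized effect algebra for each of $\R$ and $\S$ inside $(\Vp;\op,o)$. Containment of $o$ is immediate in both cases, since $o$ is both regular and singular (Remark \ref{sireg}). The nontrivial content is closure: whenever $t \op s = u$ and at least two of $t,s,u$ lie in the subset, all three do.

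For $\R$, I would split into two cases according to which two of $t,s,u$ are assumed regular. The central tool is the definition of $\op$, which asserts $u_r = (t+s)_r = t_r + s_r$, combined with Lemma \ref{restlem}(ii), which then upgrades this to $u_s = (t+s)_s = t_s + s_s$. In the case $t,s \in \R$ we have $t_s = s_s = o$, so $u_s = o$ and therefore $u \in \R$. In the case where one summand, say $s$, is unknown but $t,u \in \R$, I use the same identity $u_s = t_s + s_s$: since $t_s = u_s = o$, and since the sum in $\Vp$ obeys the cancellation law (GEiv), this forces $s_s = o$, i.e.\ $s \in \R$.

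For $\S$, the argument is even more direct and uses only the defining condition of $\op$ (without needing Lemma \ref{restlem}(ii)). If $t,s \in \S$ then $t_r = s_r = o$, whence $u_r = t_r + s_r = o$, showing $u \in \S$. If instead $t,u \in \S$ so that $t_r = u_r = o$, then the identity $u_r = t_r + s_r$ together with cancellation yields $s_r = o$, whence $s \in \S$.

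I do not anticipate a genuine obstacle: the whole argument is a short bookkeeping exercise on the regular/singular decomposition, and every ingredient (the decomposition itself, the equivalence $(t+s)_r = t_r+s_r \Leftrightarrow (t+s)_s = t_s+s_s$, and cancellation in $(\Vp;\oplus,o)$) is already available. The only place where one must be a little careful is the ``$t,u$ in the subset, $s$ unknown'' case, where one must remember that Lemma \ref{restlem}(ii) is needed for $\R$ (to pass from the $r$-identity built into $\op$ to the $s$-identity), whereas for $\S$ the $r$-identity suffices on its own.
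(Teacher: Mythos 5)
Your argument is essentially the paper's: both rest on the defining identity $(t+s)_r=t_r+s_r$ of $\op$, on Lemma \ref{restlem}(ii) to pass between the regular and singular identities, and on the two-out-of-three case analysis (the paper merely swaps which of the two identities it uses in each case, writing $t+s=t+s_r$ for $\R$ and invoking \ref{restlem}(ii) for $\S$). So the route is the same and the cases are all covered.

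One step needs tightening in your ``$t,u$ known, $s$ unknown'' cases. The identity $u_s=t_s+s_s$ (resp.\ $u_r=t_r+s_r$) is an equality of forms on $D(u)=D(t)\cap D(s)$ only, so what you actually obtain is that $s_s$ (resp.\ $s_r$) vanishes on $D(t)\cap D(s)$, not yet on all of $D(s)$; likewise the appeal to (GEiv) is not quite licit, since $s_s$ need not belong to $\Vp$ (it can be a bounded form with domain $D(s)\neq\mathcal H$) --- though pointwise cancellation of complex numbers gives the same conclusion. To close the gap, split as the paper does: if $s$ is unbounded, the definition of $\oplus$ forces $D(t+s)=D(s)=D(s_s)$, so vanishing on $D(t+s)$ is vanishing on $D(s)$ and $s\in\R$ (resp.\ $s\in\S$); if $s$ is bounded, Lemma \ref{restlem}(i) gives $s=s_r\in\R$ at once in the regular case, and in the singular case $s=s_r$ vanishes on the dense set $D(t)\cap D(s)$, hence $s=o\in\S$. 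This is exactly why the paper restricts to $s\in\Vp\setminus\B$ before cancelling.
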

\begin{proof}
We have $\R \cap \S = \{ o \}$. From Theorem \ref{regg} the usual sum $t_r + s_r$ of two regular bilinear  forms $t_r, s_r$ is regular. Let
$t,u \in \R$, $s \in \Vp \setminus \B$ such that $t \op s = u$. Then $(t + s)_{r} = t_{r} + s_{r}$, hence
$t+s = u = u_{r} = (t + s)_{r} = t_{r} + s_{r} = t + s_{r}$ that is $t + s = t + s_{r}$. Since $t \oplus s$
is defined and $s$ is unbounded, we have $D(t + s) = D(s)$ hence $D(s) = D(s_{r})$ and $s = s_{r}$.

Now let us assume that, for $t,s \in \S$, $t \op s$ is defined. By Lemma \ref{restlem} (ii) $(t + s)_{s} = t_{s} + s_{s} = t + s$. The proof of closedness of the subtraction is analogous to the regular case.
\end{proof}

\begin{remark}
It is easy to see that a partial operation $\oplus$ coincides with $\op$ on $\R$ ($\S$ respectively). That is, $\oplus_{\R} = \op_{\R}$ and $\oplus_{\S} = \op_{\S}$. Note that $\R$ is a sub-generalized
effect algebra of $(\Vp; \op,o)$, but it is not a sub-generalized effect algebra of $(\Vp; \oplus,o)$ and the corresponding restricted generalized effect algebras $(\R; \oplus_{\mid \R},o)$ and $(\R; \op_{\mid \R},o)$ are isomorphic.

\end{remark}

\section{Monotone convergence}

In the section, we will study some monotone Dedekind upwards (downwards) $\sigma$-complete properties of different kinds of generalized effect algebras of bilinear forms corresponding to the induced order $\le :=\le_\oplus.$ It is necessary to recall that some monotone convergence theorems for bilinear forms were studied in \cite{bs, kato}, however not with respect to generalized effect algebras and their induced partial order $\le =\le_\oplus, $ but rather with respect to the partial order $\preceq$ of bilinear forms defined by (\ref{prec}).

We say that a generalized effect algebra $E$ is (i) {\it monotone Dedekind upwards $\sigma$-complete} if, for any sequence $x_1 \le x_2\le \cdots,$ which is  dominated  by some element $x_0$, i.e. each $x_n \le x_0,$ the element $x = \bigvee_n x_n$  is defined in $E$ (we write $\{x_n\}\nearrow x$), (ii) {\it monotone Dedekind downwards $\sigma$-complete} if, for any sequence $x_1 \ge x_2\ge \cdots,$  the element $x = \bigwedge_n x_n$  is defined in $E$ (we write $\{x_n\}\searrow x$). If $E$ is an effect algebra,  both later  notions are equivalent. In addition, we say that a generalized effect algebra $E$ is {\it upwards directed} if, given $a_1,a_2 \in E,$ there is $a \in E$ such that $a_1,a_2 \le a.$

\begin{lemma}\label{le:5.1}
Let  $(E,\oplus,0)$ be a generalized effect algebra. If $E$ is Dedekind upwards monotone $\sigma$-complete, then $(E,\oplus,0)$ is  Dedekind downwards monotone $\sigma$-complete. If, in addition, $E$ is upwards directed, then $E$ is Dedekind upwards monotone $\sigma$-complete if and only if $E$ is downwards monotone $\sigma$-complete.
\end{lemma}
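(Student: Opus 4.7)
The plan is to exploit the order-reversing involution $x \mapsto x_0 \ominus x$ that every interval $[0,x_0]$ of a generalized effect algebra carries: this antitone bijection converts increasing sequences into decreasing ones and vice versa, and turns suprema in $[0,x_0]$ into infima. Both implications will be obtained by this dualization, with the appropriate choice of the pivot element $x_0$.

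For the first assertion, I would take a decreasing sequence $x_1 \ge x_2 \ge \cdots$ in $E$ and set $y_n := x_1 \ominus x_n$; by (ED) and (GEiv) this yields an increasing sequence in the interval $[0,x_1]$, so the hypothesis supplies $y := \bigvee_n y_n$, with $y \le x_1$ automatic because $x_1$ is an upper bound. Defining $x := x_1 \ominus y$, the verifications are (i) $x \le x_n$ for every $n$, using $x \oplus y = x_1 = x_n \oplus y_n$, the inequality $y_n \le y$, and cancellation; and (ii) for any lower bound $w$ of $\{x_n\}$, $w \le x_1$ holds automatically, and the symmetric argument (starting from $x_n = w \oplus (x_n \ominus w)$ and using $x_1 = x_n \oplus y_n = w \oplus (x_1 \ominus w)$) shows $x_1 \ominus w$ dominates each $y_n$, hence dominates $y$, and subtracting from $x_1$ yields $w \le x$. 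Nothing in this direction requires $E$ to be upwards directed, because the pivot $x_1$ automatically sits above every $x_n$ and above every lower bound considered.

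For the converse in the second assertion, I would take an increasing sequence $\{x_n\}$ dominated by the given $x_0$, set $y_n := x_0 \ominus x_n$, extract $y := \bigwedge_n y_n$ by downwards $\sigma$-completeness, and put $x := x_0 \ominus y$; the routine cancellations show $x_n \le x \le x_0$ and that $x$ is the supremum of $\{x_n\}$ inside the interval $[0,x_0]$. The main obstacle, and the only place upwards directedness enters the whole lemma, is promoting this interval supremum to a supremum in all of $E$: given an arbitrary upper bound $w$ of $\{x_n\}$ which may be incomparable with $x_0$, upwards directedness delivers $c \in E$ with $x_0,w \le c$, and running the same construction inside $[0,c]$ produces an element $x^c$ which, after checking $x^c \le x_0$ (because $x_0$ itself is an upper bound of $\{x_n\}$ inside $[0,c]$), must coincide with $x$ by uniqueness of the supremum in $[0,x_0]$; the cancellation argument from part (i) applied now inside $[0,c]$ then gives $x = x^c \le w$, completing the identification $x = \bigvee_n x_n$ in $E$.
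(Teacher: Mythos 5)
Your proposal is correct and follows essentially the same route as the paper's own proof: part (1) is the identical dualization via $y_n = x_1\ominus x_n$, and part (2) matches the paper's use of upwards directedness to reconcile the interval suprema computed over different upper bounds (the paper compares two arbitrary bounds $b_1,b_2$ through a common $b_3$, while you compare the fixed dominating element $x_0$ with an arbitrary bound $w$ through a common $c$ — the same argument in a slightly different arrangement).
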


\begin{proof}
(1) Assume $E$ is upwards monotone $\sigma$-complete.
Let us have a sequence $\{ a_n \}_{n \in \mathbb{N}} \in E$ such that $a_1 \geq a_2 \geq \cdots$. Then the non-decreasing sequence $\{a_1 \ominus a_n\}_{n \in \mathbb{N}}$  is dominated by $a_1$. Hence, there exists $a' = \bigvee_{n\in\mathbb{N}} \{a_1 \ominus a_n\}$. Since $a_1 \geq a_1 \ominus a_n$ for all $n\in\mathbb{N}$, then
$a_1 \geq a'$, i.e., $a_1 \ominus a'$ is defined. We have $a_n = a_1 \ominus (a_1 \ominus a_n) \geq a_1 \ominus a',$ hence  $a_1 \ominus a'$ is a lower bound of the sequence $\{ a_n \}_{n \in \mathbb{N}}$.
Let us assume that there exists $b \in E$ such that $b\leq a_n$ for all $n \in \mathbb{N}$. Then $a_1 \ominus b \geq a_1 \ominus a_n$ for all $n \in \mathbb{N},$
therefore $a_1 \ominus b \geq \bigvee_{n\in\mathbb{N}} \{a_1 \ominus a_n\} = a'.$ Hence $a_1 \geq a' \oplus b$ and $a_1 \ominus a'\geq b$. That is,  $a_1 \ominus a' = \bigwedge_{n \in \mathbb{N}} \{a_n\}$.

(2) Now let $E$ be upwards directed and downwards monotone $\sigma$-complete.
Assume  $\{a_n\}_{n\in\mathbb{N}}$ is a non-decreasing sequence of elements of $E.$ Let $b_1$ be any upper bound of $\{a_n\}_{n \in \mathbb{N}}.$  Then there exists $\bigwedge _{n \in \mathbb{N}} (b_1\ominus a_n)=b_1'.$ Hence, $b_1\ominus a_n \ge b'_1,$ so that $a_n = b_1 \ominus (b_1\ominus a_n) \le b_1\ominus b_1'. $ First we assert that $b_1 \ominus b'_1$ is the least upper bound of $\{a_n\}_{n \in \mathbb{N}}$ in the interval $[0,b_1]:=\{x \in E \mid 0\le x \le b_1\}.$ Indeed, if $c\in E$ is such that $a_n \le c \le b_1,$ then $b_1\ominus c \le b_1\ominus a_n$ which yields $b_1 \ominus c \le b_1'$ and $b_1\ominus b_1'\le c.$

Now if $b_2$ is another bound of $\{a_n\}_{n \in \mathbb{N}},$ we repeat the above steps and we have that $b_2 \ominus b_2',$ where  $\bigwedge_{n \in \mathbb{N}} (b_2 \ominus a_n)=b_2',$ is the least upper bound of $\{a_n\}_{n \in \mathbb{N}}$ in the interval $[0,b_2].$  Since $E$ is upwards directed, there is $b_3\in E$ such that $b_3\ge b_1,b_2.$ We repeat again the procedure also for $b_3,$ so that $b_3 \ominus b_3',$ where $b_3'=\bigwedge_{n \in \mathbb{N}}(b_3\ominus a_n),$ is the least upper bound of the sequence $\{a_n\}_{n \in \mathbb{N}}$ in the interval $[0,b_3].$ But $b_1\ominus b_1'$ and $b_2\ominus b_2'$ are also upper bounds for $\{a_n\}_{n \in \mathbb{N}}$ and both are from the interval $[0,b_3],$ we conclude $b_3\ominus b_3'\le b_1\ominus b_1', b_2\ominus b_2'.$ This entails that $b_3\ominus b_3'$ is an upper bound for $\{a_n\}_{n \in \mathbb{N}}$ in the intervals $[0,b_1]$ and $[0,b_2]$ so that, $b_1\ominus b_1'\le b_3\ominus b_3'$ and $b_2\ominus b_2'\le b_3 \ominus b_3'.$  Hence, $b_1\ominus b_1'=b_3\ominus b_3 = b_2\ominus b_2',$ which proves that $\bigvee_{n \in \mathbb{N}} a_n $ exists in $E$ and $b_1 \ominus b_1'= \bigvee_{n \in \mathbb{N}} a_n = b_2\ominus b_2'.$
\end{proof}

\begin{theorem}\label{mdd}
Let ${\mathcal H}$ be an infinite-dimensional complex Hilbert space. Then the generalized effect algebra $(\Vp; \oplus, o)$ is
monotone Dedekind downwards  $\sigma$-complete.
\end{theorem}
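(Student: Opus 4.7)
The plan is to construct the infimum as a pointwise monotone limit of the associated quadratic forms, recover the bilinear form by polarization, and then verify that this candidate is indeed the meet with respect to the effect-algebraic order $\le_\oplus$. Let $t_1\ge_\oplus t_2\ge_\oplus\cdots$ be a non-increasing sequence in $\Vp$. First I would analyze the possible domains: by Remark \ref{remord}, $t_{n+1}\le_\oplus t_n$ forces either $D(t_{n+1})=\mathcal H$ or $D(t_{n+1})=D(t_n)$, and because every bounded form in $\Vp$ must have full domain $\mathcal H$, a short induction shows $D(t_n)\in\{D(t_1),\mathcal H\}$ for all $n$, with the value $\mathcal H$ being absorbing once it appears. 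In particular, $D:=D(t_1)\subseteq D(t_n)$ for every $n$.

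For each $x\in D$ the numerical sequence $t_n(x,x)$ is non-increasing and bounded below by $0$, hence $q(x):=\lim_n t_n(x,x)$ exists. The polarization identity $t_n(x,y)=\tfrac{1}{4}\sum_{k=0}^{3} i^{k}\, t_n(x+i^{k}y,\,x+i^{k}y)$ then forces $t(x,y):=\lim_n t_n(x,y)$ to exist on $D\times D$, with bilinearity, symmetry, and positivity all passing to the limit. If $t$ is bounded on $D$, I would invoke Lemma \ref{exbil} to extend it uniquely to a bounded positive bilinear form on all of $\mathcal H$; otherwise I keep $D(t)=D$. In either case $t\in\Vp$, and this is my candidate for $\bigwedge_n t_n$.

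The verification splits into two parts. For $t\le_\oplus t_n$, I set $r_n(x,y):=t_n(x,y)-t(x,y)$ on the natural common domain (namely $D(t_n)$), check $r_n\in\Vp$, and conclude $t\oplus r_n=t_n$ by the case analysis of the partial operation (either $t$ or $r_n$ is bounded, or they share a domain). For the universal property, given $s\in\Vp$ with $s\le_\oplus t_n$ for every $n$, one has $D(s)\supseteq D(t_n)\supseteq D$ and $s(x,x)\le t_n(x,x)\to t(x,x)$ on $D$; if $t$ was extended boundedly to $\mathcal H$, the inequality propagates to all of $\mathcal H$ by continuity, yielding $s\preceq t$. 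The domain constraint encoded in $\le_\oplus$ (as opposed to the weaker $\preceq$) is then recovered from the $\Vp$-axiom that bounded members have full domain.

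The main obstacle I anticipate is the borderline case in which every $t_n$ is unbounded on a proper dense subspace $D\subsetneq\mathcal H$ but the pointwise limit $t$ turns out to be bounded: here $t$ must be extended to $\mathcal H$ via Lemma \ref{exbil} in order to live in $\Vp$, and the extended form has to serve as the infimum with respect to $\le_\oplus$, not merely with respect to $\preceq$. The delicate step is ruling out a potential lower bound $s$ with $D(s)=D\neq\mathcal H$: since any such $s$ would be dominated by the bounded $t$ on $D$ and hence bounded, the defining condition of $\Vp$ forces $D(s)=\mathcal H$, eliminating the offending case and allowing the chain $s\le_\oplus t_n \Rightarrow s\le_\oplus t$ to close.
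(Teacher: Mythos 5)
Your proposal is correct and follows the same core strategy as the paper's proof: the quadratic forms $t_n(x,x)$ form non-increasing sequences bounded below by $0$, hence converge pointwise; polarization recovers a positive bilinear form $t$; and the characterization of $\le_\oplus$ in Remark \ref{remord} identifies $t$ as $\bigwedge_n t_n$. The difference is organizational but not trivial. The paper splits into a dichotomy --- if some $t_n$ is bounded it invokes Kato's Theorem VIII.3.3 to produce a limit with domain $\mathcal H$, and otherwise it works on the common domain $D(t_1)$ --- whereas you give a unified construction on $D(t_1)$ and repair the domain afterwards with Lemma \ref{exbil}. The place where you are genuinely more careful is the borderline case you flag at the end: a decreasing sequence of \emph{unbounded} forms with common proper dense domain $D$ can perfectly well have a \emph{bounded} pointwise limit (e.g.\ $t_n=\tfrac1n u+b$ with $u$ unbounded on $D$ and $b$ bounded), and then the naive candidate with domain $D$ violates the defining condition of $\Vp$ that bounded members have full domain; the paper's second case sets $D(t):=D(t_1)$ without addressing this. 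Your extension of $t$ to $\mathcal H$ via Lemma \ref{exbil}, together with the observation that any lower bound with domain $D$ would be dominated by the bounded $t$ on all of its domain and hence forced to have domain $\mathcal H$, is exactly what is needed to close that case. Your explicit verification of the two halves of the infimum property (constructing $r_n=t_n-t$ and checking the universal property) is likewise more detailed than the paper's one-line appeal to Remark \ref{remord}, and it survives scrutiny.
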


\begin{proof}
From Theorem \ref{subbound}, we can see that if $s, t \in \Vp$, $s \leq t$ and
$t$ is a bounded bilinear form, then $s$ is a bounded bilinear form as well.

Let, for a sequence of positive bilinear forms $\{t_n\}_{n \in \mathbb{N}}$ from $\Vp,$ we have $t_1 \geq t_2 \geq \cdots.$ The sequence is bounded from below e.g. by the bilinear form $o.$ Whenever $t_n$ is a bounded bilinear form for some $n\in \mathbb{N}$, then $t_m$ is bounded for all $m \in \mathbb{N}$, $m \geq n$. We can
use \cite[Thm VIII 3.3]{kato} which states that there exists $t \in \mathcal{PBF(H)}$ with $D(t)=\mathcal H$ such that $t \leq t_n$ for all $n \in \mathbb{N}$ and $\lim_{n\to\infty} t_n(u,v) = t(u,v)$ for every $u,v \in \mathcal{ H}$.
With Remark \ref{remord} it can be seen that $t=\bigwedge_n t_n.$

Let us consider a case when  $t_n$ is an unbounded bilinear form  for any $n \in \mathbb{N}$. According to the definition of $\oplus,$ we have $D(t_n) = D(t_m)$ for all $n,m \in \mathbb{N}$.
We show that $t$ given by $t(u,v) = \lim_{n\to\infty} t_n(u,v)$ for all $u,v \in D(t_1)$ is the requested positive bilinear form with $D(t):=D(t_1).$

The expression $t(u,u)$ is defined for all $u \in D(t_n)$ since $\{t_n(u,u) \}_{n \in \mathbb{N}}$ is for all $u \in D(t_n)$ a non-increasing sequence bounded by $0$.
Using the polarization formula on $t_n,$ we can see that also $t(u,v) \in \mathbb{C}$ is defined for every $u, v \in D(t_n)$. The bilinearity of $t$ follows from
properties of limits of sequences of complex numbers.

The sequence $\{t_n\}_{n \in \mathbb{N}}$ is non-increasing  and $t$ is its lower bound. Since $t(u,u)$ is a limit of $\{t_n(u,u)\}_{n \in \mathbb{N}},$ we have $t = \bigwedge _nt_n.$
\end{proof}

\begin{remark}
From the proof of the previous theorem we can see that a sub-generalized effect algebra $(\B; \oplus_{\mid \B}, o)$ is monotone Dedekind downwards  $\sigma$-complete.
\end{remark}

\begin{example}\label{ex:5.4}
Let $\mathcal{ H}$ be a separable complex Hilbert space with an orthonormal basis $\mathcal{ E} = \{ e_j \}_{j \in \mathbb{N}}$  and let $T$ be an operator defined
by $Te_j =: je_j$ for all $n\in \mathbb{N}$ on $D(T) = \{\sum_{j=1}^{\infty}{\alpha_j e_j} \in \mathcal{ H} \mid \sum_{j=1}^{\infty} \left|j \alpha_j \right| < \infty  \}$. $T$ is an unbounded, positive and self-adjoint linear operator. Define operators $T_n e_j:=je_j$ for every $j \leq n$ and $T_n e_j = 0$ otherwise. Clearly $T_n$ is bounded, i.e. it can be uniquely extended onto $\mathcal{ H}$ and $T_n \leq T$ for all $n \in \mathbb{N}$. Moreover, $\lim_{n\to\infty} T_n x = Tx$ for every $x \in D(T)$.

Let us take the restriction $T_{\mid span (\mathcal{ E})}$. We have also $T_n \leq T_{\mid span (\mathcal{ E})}$, but $T$ and $T_{\mid span (\mathcal{ E})}$ are not comparable.
\end{example}

Let us consider bilinear forms $t, t_{\mid span (\mathcal{ E})}, t_n$ generated by $T, T_{\mid span (\mathcal{ E})}, T_n$ for every $n \in \mathbb{N}$.
We have $t_n \leq t$ and $t_n \leq t_{\mid span (\mathcal{ E})}$ for all $n \in \mathbb{N}$, but $t$ and $t_{\mid span (\mathcal{ E})}$ are incomparable. Since both are regular and
generated by a linear operator, it follows that
$(\Vp; \oplus, o)$, $(\R; \oplus_{\mid \R}, o)$ and $(\G; \oplus_{\mid \G}, o)$ are not Dedekind upwards $\sigma$-complete. Note that $t\preceq t_{\mid span (\mathcal{ E})}$,
which makes a difference between a case considered in \cite{kato} and \cite{bs}.

On the other hand, according to Remark \ref{remord}, for any $t,s \in \Vp,$ if $t \leq s,$ then $t \preceq s$.

Recall an example from {\rm\cite{kato}}.

\begin{example}{\rm\cite[Ex VIII.3.10]{kato}}\label{exkato}
Let us have the Hilbert space $\mathcal{ H} = \mathcal{
 L}^2(0,1)$ and let us define, for every integer $n \in \mathbb N,$ a bilinear form $t_n$ by
\begin{equation}
t_n(u,u) = \frac{1}{n} \int_0^1 \left|u'(x)\right|^2 {\rm d}x + \left|u(0)\right|^2 + \left|u(1)\right|^2.
\end{equation}
Then for every $n \in \mathbb{N}$, $t_n$ is positive and closed on $D(t_n)$, which is given by $u \in D(t_n)$ whenever $u' \in \mathcal{ L}^2(0,1)$. Moreover, $t_n \geq_{\C} t_{n+1}$ since $t_n - t_{n+1}$ is a positive closed bilinear form. But
\begin{equation}
t_0:= \lim_{n \to \infty} t_n(u,u) = \left|u(0)\right|^2 + \left|u(1)\right|^2,
\end{equation}
which is known to be a singular bilinear form.

On the other hand, consider the sequence $\{ t_1 - t_n \}_{n \in \mathbb{N}}$. It is an upwards monotone  sequence of closed positive bilinear forms dominated by $t_1$, i.e. $t_1 \geq  (t_1 - t_n)$ for every ${n \in \mathbb{N}}$.
Let us set

\begin{equation}
t'(u,u) := \lim_{n \to \infty} t_1(u,u) - t_n(u,u) = \int_0^1 \left|u'(x)\right|^2 {\rm d}x.
\end{equation}

Then $t'$ is a closed bilinear form and $t' \geq (t_1 - t_n)$ for all ${n \in \mathbb{N}}$ since $t' - (t_1 - t_n)$ is defined (and closed). Note that
$t' \preceq t_1$ and also $t' \leq t_1$, but $t' \nleq_{\R} t_1$ and $t' \nleq_{\C} t_1$ since $t_1 - t'= t_0$ is a singular positive bilinear  form.

\end{example}

\begin{theorem}\label{clos}
Let ${\mathcal H}$ be an infinite-dimensional complex Hilbert space. The generalized effect algebras $(\R; \oplus_{\mid \R}, o)$ 
and $(\C;$ $ \oplus_{\mid \C}, o)$ are not
monotone Dedekind downwards $\sigma$-complete.
\end{theorem}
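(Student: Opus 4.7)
My plan is to exhibit the decreasing sequence $\{t_n\}_{n \in \mathbb{N}}$ from Example~\ref{exkato} as a witness that Dedekind downwards $\sigma$-completeness fails in both $\C$ and $\R$. The first step is to verify monotonicity: for every $n$, the difference
\[
t_n - t_{n+1} = \Bigl(\frac{1}{n} - \frac{1}{n+1}\Bigr)\int_0^1 |u'(x)|^2\, dx
\]
is a positive closed bilinear form on $H^1(0,1)$, hence both closed and regular. Consequently $t_{n+1} \leq_{\C} t_n$ and $t_{n+1} \leq_{\R} t_n$, and the sequence is non-increasing in each of the generalized effect algebras $(\C;\oplus_{\mid \C}, o)$ and $(\R;\oplus_{\mid \R}, o)$.

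Next, I would invoke Theorem~\ref{mdd} to locate the ambient $\Vp$-infimum: by the pointwise-limit construction in its proof, $\inf_{\Vp}\{t_n\} = t_0$, where $t_0(u,u) = |u(0)|^2 + |u(1)|^2$ on the common domain $D(t_1) = H^1(0,1)$. By Example~\ref{exkato}, $t_0$ is singular, so $t_0 \notin \R$ and $t_0 \notin \C$; this places the natural candidate for the infimum strictly outside both sub-algebras.

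Assume for contradiction that an infimum $s$ of $\{t_n\}$ exists in $\C$ (respectively $\R$). Then $s$ is a $\Vp$-lower bound of $\{t_n\}$, so by the greatest-element property of the $\Vp$-infimum, $s \leq_\oplus t_0$ in $\Vp$. Remark~\ref{remord} yields $s \preceq t_0$, and Eq.~\eqref{eq:reg} gives $s_r \preceq (t_0)_r = o$, forcing the regular part of $s$ to vanish on the common domain. Since $s$ is closed (resp.\ regular), $s = s_r$; combining this with the monotone-convergence behavior of decreasing closed/regular forms in the spirit of~\cite{kato} then forces $s$ to agree pointwise with the limit~$t_0$, contradicting $s \in \C$ (resp.\ $s \in \R$). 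The main technical obstacle is precisely this last link: rigorously deducing from the hypothesis ``$s$ is a greatest lower bound in $\C$ (or $\R$)'' that $s$ must coincide pointwise with the singular $\Vp$-infimum $t_0$, rather than being merely dominated by it — essentially, upgrading the coarse dominance $s \leq_\oplus t_0$ via the sub-algebra's own order to an equality in the appropriate domain, at which point the singularity of $t_0$ immediately forbids $s$ from lying in $\C$ or $\R$.
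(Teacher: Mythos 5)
There is a genuine gap, and it sits exactly where you flagged it: the sequence $\{t_n\}$ of Example \ref{exkato} is not a witness, because it \emph{does} have an infimum in $\C$ (and in $\R$), namely $o$. Your own chain of deductions proves this. If $s\in\C$ is any lower bound of $\{t_n\}$, then $s\le_\oplus t_n$ in $\Vp$, hence $s\le_\oplus\bigwedge_{\Vp}t_n=t_0$, hence $s\preceq t_0$ and, by (\ref{eq:reg}), $s_r\preceq (t_0)_r$, which is the zero form on $D(t_0)=D(t_1)$. Since $s$ is closed it is regular, so $s=s_r$ vanishes on $D(t_1)$. Now $s\le_\oplus t_1$ forces $D(s)=D(t_1)$ or $D(s)=\mathcal H$; in the first case $s$ would be a bounded form with domain $D(t_1)\ne\mathcal H$, which is excluded by the definition of $\Vp$, so $s$ is bounded with $D(s)=\mathcal H$ and, vanishing on the dense subspace $D(t_1)$, equals $o$. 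Thus the set of lower bounds of $\{t_n\}$ in $\C$ (and likewise in $\R$) is $\{o\}$ and $\bigwedge_{\C}t_n=o$ exists — your sequence establishes completeness for this particular chain rather than refuting it. Note also that your hoped-for conclusion ``$s$ agrees pointwise with $t_0$'' is incompatible with your own step $s=s_r\preceq(t_0)_r=o$ unless $t_0=o$; the underlying error is the tacit assumption that an infimum computed inside a sub-generalized effect algebra must coincide with the ambient $\Vp$-infimum. The fact that $t_0\notin\C$ by itself proves nothing.

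The paper circumvents this by modifying the sequence: it sets $s_n:=t_1+t_n-t_0$, i.e. $s_n(u,u)=(1+\tfrac1n)\int_0^1|u'(x)|^2\,{\rm d}x+|u(0)|^2+|u(1)|^2$. This decreasing sequence of closed forms admits \emph{two} lower bounds $t_1$ and $t'$ in $\C$ (because $s_n-t_1=\tfrac1n\int_0^1|u'|^2$ and $s_n-t'=t_n$ are closed and positive), and these are incomparable in $\le_{\mid\C}$ since $t_1-t'=t_0$ is singular. As $\bigwedge_{\Vp}s_n=t_1$ by Theorem \ref{mdd}, any infimum $r$ of $\{s_n\}$ in $\C$ would have to satisfy both $r\le_\oplus t_1$ and $r\ge_{\mid\C}t_1$, forcing $r=t_1$, and then $t'\le_{\mid\C}t_1$ — a contradiction. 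If you wish to keep your overall strategy, you must engineer the lower-bound set of your decreasing sequence to contain incomparable maximal elements, as the paper does; a sequence whose only closed (or regular) lower bound is $o$ cannot do the job.
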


\begin{proof}
Let us consider a sequence $\{s_n\}_{n\in \mathbb{N}} \in \C$ given by $s_n: = t_1 + t_n-t_0$, where  $t_n$'s are the bilinear forms from the previous example.
Namely:
\begin{equation}
s_n(u,u) = (1+ \frac{1}{n}) \int_0^1 \left|u'(x)\right|^2 {\rm d}x + \left|u(0)\right|^2 + \left|u(1)\right|^2.
\end{equation}
Then $s_n \geq_{\mid \C} s_{n+1}$ 
($s_n \geq_{\mid \R} s_{n+1},$ respectively) for all $n \in \mathbb{N}$. Clearly $t_1 \leq_{\mid \C} s_n$ and $t' \leq_{\mid \C} s_n$,
but $t_1$ and $t'$ are mutually incomparable in the ordering given by $\leq_{\mid \C}$ 
and $\leq_{\mid \R}$. Since $t'\leq t_1$ and
$t_{1} = \bigwedge_{n\in \mathbb{N}} s_n$ in $(\Vp;\oplus,o),$ they also have no join $r$  which would satisfy
$r \leq_{\mid \C} s_n$ for all $n \in \mathbb{N}$. That is, the infimum of $\{s_n\}_{n \in \mathbb{N}}$ does not exist.
\end{proof}

\begin{theorem}\label{up}
Let ${\mathcal   H}$ be an infinite-dimensional complex Hilbert space and $D \subseteq \mathcal{ H}$ its dense linear subspace. Let us define the set $\VD \subseteq \Vp$ by

$$ \VD = \{t \in \Vp \mid t \text{ is bounded, or }  D(t) = D \}. $$
Then $\VD$ is a sub-generalized effect algebra of $(\Vp; \oplus, o),$  and  $(\VD$; $\oplus_{\mid \VD}, o),$ with a total operation $\oplus_{\mid \VD},$ is monotone Dedekind upwards and downwards $\sigma$-complete.
\end{theorem}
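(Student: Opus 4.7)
The plan is to first show that $\oplus_{\mid \VD}$ is a total operation on $\VD$, then verify that $\VD$ is a sub-generalized effect algebra of $(\Vp; \oplus, o)$, next derive downwards monotone Dedekind $\sigma$-completeness by showing that the infimum supplied by Theorem \ref{mdd} lies in $\VD$, and finally invoke Lemma \ref{le:5.1} to obtain the upwards version.

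Totality of $\oplus_{\mid \VD}$ follows from a case-split on whether $t, s \in \VD$ are bounded: in each of the three configurations (both bounded, exactly one bounded, both unbounded), the definedness clause of $\oplus$ in $\Vp$ together with $D(t), D(s) \in \{\mathcal{H}, D\}$ makes $t \oplus s$ defined, and the sum is either bounded on $\mathcal{H}$ (by Lemma \ref{nr}) or unbounded on $D$; in either outcome $t \oplus s \in \VD$. Totality also shows $\VD$ is upwards directed via $t, s \leq t \oplus s$. For the sub-generalized effect algebra property, by commutativity and totality it remains to check that $t \oplus s = u$ in $\Vp$ with $t, u \in \VD$ forces $s \in \VD$. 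If $u$ is bounded, Theorem \ref{subbound} makes $s$ bounded and the domain equation $D(t) \cap D(s) = \mathcal{H}$ forces $D(s) = \mathcal{H}$. If $u$ is unbounded with $D(u) = D$, a subcase analysis on whether $t$ is bounded, combined with the three possible reasons $\oplus$ can be defined in $\Vp$, yields $D(s) \in \{\mathcal{H}, D\}$ together with the matching boundedness status, hence $s \in \VD$.

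For downwards $\sigma$-completeness, let $\{t_n\} \subseteq \VD$ be decreasing and let $t$ be the infimum supplied in $\Vp$ by Theorem \ref{mdd}. If some $t_N$ is bounded, then $t_m$ is bounded for all $m \geq N$ by Theorem \ref{subbound}, and the Kato-based part of the proof of Theorem \ref{mdd} produces $t$ bounded on $\mathcal{H}$; otherwise every $t_n$ has $D(t_n) = D$, and the pointwise-limit construction in the proof of Theorem \ref{mdd} yields $t$ with $D(t) = D$, which is extended uniquely to $\mathcal{H}$ via Lemma \ref{exbil} if $t$ turns out bounded. In either case $t \in \VD$. Combined with upwards-directedness, Lemma \ref{le:5.1} then delivers upwards monotone Dedekind $\sigma$-completeness. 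The main obstacle I anticipate is the sub-generalized effect algebra verification when $u$ is unbounded: the definedness of $\oplus$ in $\Vp$ allows several boundedness configurations for $t$ and $s$, each of which must be tracked against the equality $D(t) \cap D(s) = D(u) = D$ to conclude $s \in \VD$; once this bookkeeping is finished, the $\sigma$-completeness claims follow immediately from the cited results.
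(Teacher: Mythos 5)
Your proposal is correct and follows essentially the same route as the paper: establish that $\oplus_{\mid \VD}$ is total (hence $\VD$ is upwards directed), verify the sub-generalized-effect-algebra property, obtain downwards monotone Dedekind $\sigma$-completeness from Theorem \ref{mdd} by checking the infimum lands in $\VD$, and then apply Lemma \ref{le:5.1} for the upwards version. You merely spell out in more detail the case analyses that the paper's proof asserts without elaboration.
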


\begin{proof}

Since $o$ is bounded, $o \in \VD$.  If $t,s \in \VD,$ then always $t+s$ is defined in $\VD,$ so that $t\oplus_{\mid \VD} s = t+s,$ and $\oplus_{\mid \VD}$ is a total operation. In addition, we note that $\oplus_{\mid \VD}$ is the restriction of $\oplus$ from $\Vp.$  This also implies that $\VD$ is upwards directed.

Since $s\le_{\mid \VD} t$ if and only if $s \le t,$ we note that $\VD$ is a sub-generalized effect algebra of $(\Vp; \oplus, o).$ By Theorem \ref{mdd}, we have that also $\VD$ is monotone Dedekind downwards $\sigma$-complete. In view of Lemma \ref{le:5.1}, $\VD$ is  monotone Dedekind upwards $\sigma$-complete, too.
\end{proof}

\begin{theorem}\label{th:5.8}
Let ${\mathcal H}$ be an infinite-dimensional complex Hilbert space. Then the generalized effect algebra $(\Vp; \op, o)$ is not
monotone Dedekind downwards $\sigma$-complete.
\end{theorem}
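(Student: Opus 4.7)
The plan is to re-use the same decreasing sequence that served for Theorem~\ref{clos}, namely
$$
s_n(u,u) \;=\; \Bigl(1+\tfrac{1}{n}\Bigr)\!\int_0^1|u'(x)|^2\,\mathrm{d}x \;+\; |u(0)|^2+|u(1)|^2,\qquad D(s_n)=D(t_n),
$$
from Example~\ref{exkato}. Each $s_n$ is closed, hence regular, so $(s_n)_r=s_n$ and $(s_n)_s=o$. The idea is that the extra elements of $\Vp$ beyond $\R$ cannot help build a greatest lower bound, because every lower bound will be forced to lie in $\R$, and then Theorem~\ref{clos} finishes the job.

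First I would check that $\{s_n\}$ is decreasing in $(\Vp;\op,o)$. The difference $s_n-s_{n+1}=\bigl(\tfrac{1}{n}-\tfrac{1}{n+1}\bigr)\int_0^1|u'|^2$ is closed, hence regular, and since all three of $s_n, s_{n+1}, s_n-s_{n+1}$ are regular the defining equality $(s_n)_r=(s_{n+1})_r+(s_n-s_{n+1})_r$ collapses to the trivial $s_n=s_{n+1}+(s_n-s_{n+1})$. Thus $s_{n+1}\op(s_n-s_{n+1})=s_n$ and $s_{n+1}\leq_{\op}s_n$ for each $n$.

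Next I would identify the $\leq_{\op}$-lower bounds. Assume $r\in\Vp$ satisfies $r\op q_n=s_n$ for every $n$. By Lemma~\ref{restlem}(ii), the condition $(s_n)_r=r_r+(q_n)_r$ is equivalent to $(s_n)_s=r_s+(q_n)_s$; since $(s_n)_s=o$ and singular parts are positive, this forces $r_s=(q_n)_s=o$. Hence $r\in\R$ and $q_n\in\R$. Because $\R$ is a sub-generalized effect algebra of $(\Vp;\op,o)$ and $\op_{\mid\R}=\oplus_{\mid\R}$ by the remark at the end of Section~4, the poset of $\leq_{\op}$-lower bounds of $\{s_n\}$ in $\Vp$ is exactly the poset of $\leq_{\oplus_{\mid\R}}$-lower bounds of $\{s_n\}$ in $\R$.

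Finally I would appeal to (the proof of) Theorem~\ref{clos}: in $(\R;\oplus_{\mid\R},o)$ the forms $t_1$ and $t'$ from Example~\ref{exkato} are both lower bounds of $\{s_n\}$, but $t_1-t'=t_0$ is singular, so they are incomparable in $\leq_{\oplus_{\mid\R}}$. If $r$ were a greatest lower bound in $\R$, then $r\geq_{\oplus_{\mid\R}}t_1$ gives $r=t_1+p$ with $p\in\R$ and $p\preceq \tfrac{1}{n}\int|u'|^2$ for all $n$, forcing $p=o$ and $r=t_1$; but then $r\geq_{\oplus_{\mid\R}}t'$ would require $t_0=t_1-t'\in\R$, contradicting singularity of $t_0$. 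Hence no $\leq_{\op}$-infimum of $\{s_n\}$ exists in $\Vp$.

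The one delicate point is the identification of the lower bound set: one must exclude the possibility that the passage from $\R$ to the larger $\Vp$ produces new lower bounds carrying a nontrivial singular piece (which would then, by absorbing the ``bad'' difference $t_0$, potentially combine $t_1$ and $t'$ into a single dominator). That obstruction is precisely what the identity $(s_n)_s=o$ via Lemma~\ref{restlem}(ii) rules out, and this is where the specific choice of closed $s_n$'s is essential.
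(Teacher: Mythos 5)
Your proposal is correct and follows essentially the same route as the paper: the same sequence $\{s_n\}$, the same pair of $\leq_{\op}$-incomparable lower bounds $t_1$ and $t'$, and the same use of the singularity of $t_0=t_1-t'$ to block comparability. The only real difference is bookkeeping: the paper pins down the candidate infimum by citing $t_1=\bigwedge_n s_n$ in $(\Vp;\oplus,o)$ from Theorem \ref{mdd}, whereas you first show via Lemma \ref{restlem}(ii) that every $\leq_{\op}$-lower bound is regular and then force $r=t_1$ by the squeeze $p\preceq\frac{1}{n}\int_0^1|u'(x)|^2\,\mathrm{d}x$.
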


\begin{proof}
Let us consider the sequence $\{ s_n \}_{n \in \mathbb{N}}$ from the proof
of Theorem \ref{clos}. Since $\R$ is a sub-generalized effect algebra of $(\Vp; \op, o)$,
we have $s_n \geq_{\op} s_{n+1}$, $s_n \geq_{\op} t_1$ and $s_n \geq_{\op}
t'$ for all $n \in \mathbb{N}$. Inequality $t' \leq t_1$ gives that
$t_1 - t' \in \Vp$. Then $t_1 = (t_1)_r = ((t_1 - t') + t')_r \not= (t_1 -
t')_r + (t')_r = o + (t')_r = t'$. Hence, $(t_1-t') \op t'$ is not defined
and we have $t' \nleq_{\op} t_1$. Considering $t' \leq t_n$, we can see
that $t_1$ and $t'$ are incomparable in the ordering given by $\leq_{\op}$.
Since $t_{1} = \bigwedge_{n\in \mathbb{N}} s_n$ in $(\Vp;\oplus,o)$, they
also have no join $r$ which would satisfy
$r \leq_{\op} s_n$ for all $n \in \mathbb{N}$. That is, the infimum of $\{s_n\}_{n \in \mathbb{N}}$ does not exist in $(\Vp; \op, o)$.
\end{proof}

Lemma \ref{le:5.1} implies that if a generalized effect algebra $E$ is not monotone Dedekind downwards $\sigma$-complete, then it is not monotone Dedekind upwards $\sigma$-complete. Therefore, from the previous theorems we conclude the following corollary.

\begin{corollary}
Let ${\mathcal H}$ be an infinite-dimensional complex Hilbert space.

{\rm (1)} The generalized effect algebra $(\VD; \oplus, o)$ is monotone Dedekind upwards and downwards $\sigma$-complete. {\rm Lemma \ref{le:5.1}, Theorem \ref{up}.}

{\rm (2)} The generalized effect algebra $(\Vp; \oplus, o)$ is not Dedekind upwards $\sigma$-complete, but it is monotone Dedekind downwards $\sigma$-complete. {\rm Example \ref{ex:5.4}, Theorem \ref{mdd}.}

{\rm (3)} The generalized effect algebras $(\R; \oplus_{\mid \R}, o),$
$(\C;$ $ \oplus_{\mid \C}, o)$ and $(\Vp; \op, o)$ are neither monotone Dedekind upwards $\sigma$-complete,
nor monotone Dedekind downwards $\sigma$-complete. {\rm Lemma \ref{le:5.1}, Theorem \ref{clos}.}
\end{corollary}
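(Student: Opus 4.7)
The plan is that this corollary is essentially a consolidation of the preceding theorems, so the proof is a matter of citing them in the right order and using the one-directional implication in Lemma \ref{le:5.1} to transfer failure of downward completeness to failure of upward completeness.

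For part (1), I would simply invoke Theorem \ref{up}: its statement already asserts both upward and downward monotone Dedekind $\sigma$-completeness of $(\VD;\oplus_{\mid \VD},o)$, using that $\oplus_{\mid \VD}$ is a total operation, $\VD$ is upwards directed, and Lemma \ref{le:5.1} supplies the passage from downward to upward in the directed case.

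For part (2), the downward $\sigma$-completeness is exactly Theorem \ref{mdd}. For the failure of upward $\sigma$-completeness, I would point to the sequence constructed in Example \ref{ex:5.4}: the bilinear forms $t_n$ generated by $T_n$ form a non-decreasing sequence dominated both by $t$ and by $t_{\mid\mathrm{span}(\mathcal{E})}$, yet these two upper bounds are incomparable in $\le_\oplus$, so the join $\bigvee_n t_n$ cannot exist in $\Vp$ (and also not in $\R$ or $\G$, as noted there).

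For part (3), Theorem \ref{clos} shows that $(\R;\oplus_{\mid\R},o)$ and $(\C;\oplus_{\mid\C},o)$ are not monotone Dedekind downwards $\sigma$-complete, and Theorem \ref{th:5.8} does the same for $(\Vp;\op,o)$. The only subtlety here is the direction in Lemma \ref{le:5.1}: it asserts that upwards $\sigma$-completeness implies downwards $\sigma$-completeness in any generalized effect algebra. Taking the contrapositive, failure of downwards $\sigma$-completeness forces failure of upwards $\sigma$-completeness, which immediately yields the remaining non-completeness statements for all three structures. The main thing to be careful about is that this contrapositive does \emph{not} require the upward-directed hypothesis of the second half of Lemma \ref{le:5.1}; it uses only the first (unconditional) implication, so no additional verification is needed for $\R$, $\C$, or $(\Vp;\op,o)$. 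With these observations assembled, the corollary follows with no new computation.
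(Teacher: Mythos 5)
Your proposal is correct and matches the paper's own justification, which consists precisely of the cited results together with the remark (stated just before the corollary) that the contrapositive of the first, unconditional implication of Lemma \ref{le:5.1} transfers failure of downwards $\sigma$-completeness to failure of upwards $\sigma$-completeness. You even correctly supply the reference to Theorem \ref{th:5.8} for the $(\Vp;\op,o)$ case, which the paper's citation list for part (3) omits.
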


In what follows, we show that if instead of the order $\le :=\le_\oplus$ induced by the addition $\oplus$ in a generalized effect algebra of positive bilinear forms we use the order $\preceq$ defined by (\ref{prec}), we can have a monotone upwards Dedekind $\sigma$-complete poset.

\begin{theorem}\label{th:close}
Let $\mathcal H$ be an infinite-dimensional complex Hilbert space. Then $(\C; \oplus_{\C},o)$ is a monotone Dedekind upwards $\sigma$-complete poset under the partial order $\preceq,$  but which is not a monotone Dedekind upwards $\sigma$-complete generalized effect algebra.
\end{theorem}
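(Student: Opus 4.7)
The two assertions are largely independent. The failure of upwards $\sigma$-completeness as a generalized effect algebra is immediate from the material already in place: Theorem \ref{clos} exhibits a $\le_\oplus$-decreasing sequence in $\C$ with no infimum, so $(\C;\oplus_{\mid\C},o)$ is not monotone Dedekind downwards $\sigma$-complete; but Lemma \ref{le:5.1} asserts that upwards $\sigma$-completeness of a generalized effect algebra forces downwards $\sigma$-completeness, so by contraposition $(\C;\oplus_{\mid\C},o)$ cannot be monotone Dedekind upwards $\sigma$-complete with respect to $\le_\oplus$ either.

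For the positive assertion I would take a $\preceq$-non-decreasing sequence $t_1 \preceq t_2 \preceq \cdots$ in $\C$ dominated by some $t_0 \in \C$. Unpacking $\preceq$ gives the chain of domain inclusions $D(t_0) \subseteq \cdots \subseteq D(t_2) \subseteq D(t_1)$ together with $t_n(x,x) \le t_0(x,x)$ for every $x \in D(t_0)$. Set
\[
D(t_\infty):=\Bigl\{x \in \bigcap_n D(t_n) \,\Big|\, \sup_n t_n(x,x)<\infty\Bigr\},\quad t_\infty(x,x):=\lim_n t_n(x,x),
\]
and recover the bilinear form $t_\infty$ from its quadratic form via polarization. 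Since $D(t_0) \subseteq D(t_\infty)$, the form $t_\infty$ is densely defined and manifestly positive. Given that $t_\infty$ is closed, the supremum property in $(\C, \preceq)$ is then routine: $t_n \preceq t_\infty$ for every $n$ by construction, and if $s \in \C$ satisfies $t_n \preceq s$ for all $n$, then $D(s)\subseteq D(t_n)$ and $t_n(x,x)\le s(x,x)$ on $D(s)$, so passing to the limit yields $D(s)\subseteq D(t_\infty)$ and $t_\infty(x,x)\le s(x,x)$, i.e., $t_\infty \preceq s$.

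The single non-trivial step, and the main obstacle, is the closedness of $t_\infty$ — equivalently, completeness of $D(t_\infty)$ in the inner product $(x,y)_{t_\infty} = t_\infty(x,y) + (x,y)$. This is precisely the classical monotone convergence theorem for closed quadratic forms (Kato \cite[Thm VIII.3.11]{kato}), which I would simply invoke rather than reprove. Everything else — density of the domain, positivity, monotonicity with respect to $\preceq$ and the universal property — is bookkeeping once closedness is granted. A useful sanity check is that Example \ref{exkato} already exhibits exactly such a situation: the sequence $\{t_1 - t_n\}$ there is $\preceq$-non-decreasing, dominated by $t_1 \in \C$, and the pointwise limit $t'$ is indeed closed, as predicted by the construction above.
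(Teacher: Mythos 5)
Your proposal is correct and follows essentially the same route as the paper: the supremum is built on $D(t_\infty)=\{x\in\bigcap_n D(t_n)\mid \sup_n t_n(x,x)<\infty\}$ as the pointwise limit of the quadratic forms, with closedness delegated to the classical monotone convergence theorem (the paper cites Simon's Thm 3.1 where you cite Kato's VIII.3.11 — the same result), and the least-upper-bound property verified exactly as you do. Your explicit derivation of the negative assertion from Theorem \ref{clos} plus the contrapositive of Lemma \ref{le:5.1} is also precisely how the paper justifies it (in the corollary preceding the theorem).
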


\begin{proof}
Assume that we have a sequence $\{t_n\}_{n \in \mathbb{N}}$ of positive closed bilinear forms which is dominated by $t_0\in \C,$ that is,  $t_1 \preceq t_2\preceq \cdots \preceq t_0.$ Therefore, we have $D(t_0)\subseteq D(t_{n+1}) \subseteq D(t_n).$ If we denote $D(t)=\{x \in \bigcap_n D(t_n) \mid \sup_n t_n(x,x)< \infty\},$ then by \cite[Thm 3.1]{bs}, the function $t(x,x)=\lim_nt_n(x,x)$ defines a closed bilinear form $t$ such that $D(t_0)\subseteq D(t).$ It yields $t_n \preceq t$ for any integer $n \in \mathbb N.$

Now let $t'\in \C$ be such that $t_n \preceq t'$ for any $n \in \mathbb N.$ Then as in the first part, we have $D(t') \subseteq D(t)$ and $t(x,x)\le t'(x,x)$ for any $x \in D(t'),$ which proves that $t \preceq t',$ and $t$ is the least upper bound of $\{t_n\}_{n \in \mathbb{N}}$ with respect to the order $\preceq.$
\end{proof}

\end{document}